\numberwithin{equation}{section}
\definecolor{orcidlogocol}{HTML}{A6CE39}
\tikzset{
  orcidlogo/.pic={
    \fill[orcidlogocol] svg{M256,128c0,70.7-57.3,128-128,128C57.3,256,0,198.7,0,128C0,57.3,57.3,0,128,0C198.7,0,256,57.3,256,128z};
    \fill[white] svg{M86.3,186.2H70.9V79.1h15.4v48.4V186.2z}
                 svg{M108.9,79.1h41.6c39.6,0,57,28.3,57,53.6c0,27.5-21.5,53.6-56.8,53.6h-41.8V79.1z M124.3,172.4h24.5c34.9,0,42.9-26.5,42.9-39.7c0-21.5-13.7-39.7-43.7-39.7h-23.7V172.4z}
                 svg{M88.7,56.8c0,5.5-4.5,10.1-10.1,10.1c-5.6,0-10.1-4.6-10.1-10.1c0-5.6,4.5-10.1,10.1-10.1C84.2,46.7,88.7,51.3,88.7,56.8z};
  }
}
\newcommand\orcidicon[1]{\href{https://orcid.org/#1}{\mbox{\scalerel*{
\begin{tikzpicture}[yscale=-1,transform shape]
\pic{orcidlogo};
\end{tikzpicture}
}{|}}}}
\newtheorem{Def}{Definition}[section]
\newtheorem{Thm}{Theorem}[Def]
\newtheorem{Cor}{Corollary}[Def]
\newtheorem{Lem}{Lemma}[Def]
\newtheorem{Prop}{Proposition}[Def]
\newtheorem{Rem}{Remark}[Def] 
\newtheorem{Ex}{Example}[Def]
\newcommand\redsout{\bgroup\markoverwith{\textcolor{red}{\rule[0.5ex]{2pt}{0.4pt}}}\ULon} 
\DeclareMathOperator{\Dom}{Dom}
\DeclareMathOperator{\Span}{Span}
\DeclareMathOperator{\CCl}{\mathbbm{C}l}
\renewcommand{\emph}{\textbf}
\title{\bf Isometric Spectral Subtriples\footnote{Corresponding author: W.Sucpikarnon.} 
\footnote{
This article may be downloaded for personal use only. Any other use requires prior permission of the author and AIP Publishing. This article appeared in \textit{Journal of Mathematical Physics 65(4):043504 (2024)}
and may be found at \href{https://doi.org/10.1063/5.0179837}{https://doi.org/10.1063/5.0179837}.
}
\footnote{
The present preprint version has been reformatted for arXiv purposes (authors are here listed in aphabetical order).}
}
\author{\normalsize  
\orcidicon{0000-0002-1387-9283} 
Paolo Bertozzini $^a$  
\ 
\orcidicon{0009-0007-3734-4180}
Wanchalerm Sucpikarnon $^b{}^*$ 
\ 
\orcidicon{0000-0003-2697-0315}
Apimook Watcharangkool $^c$ 
\\  
\\ 
\normalsize $^{a}$ \textit{Department of Mathematics and Statistics, Faculty of Science and Technology,}
\\
\normalsize \textit{Thammasat University, Pathumthani 12121, Thailand}
\\
\normalsize e-mail: $^a$ \texttt{paolo.th@gmail.com}  
\\ 
\normalsize $^b$ \textit{Department of Mathematics, Faculty of Science,} 
\\
\normalsize \textit{Chiang Mai University, Chiang Mai 50200, Thailand}
\\
\normalsize e-mail: $^b$ \texttt{wanchalerm.s@cmu.ac.th} 
\\ 
\normalsize $^c$ \textit{National Astronomical Research Institute of Thailand, Chiang Mai, 50180 Thailand}
\\
\normalsize e-mail: $^c$ \texttt{apimook@narit.or.th}
}
\date{\normalsize 24 April 2024}
\begin{document}

\maketitle

\begin{abstract}
    We investigate the notion of subsystem in the framework of spectral triple as a generalized notion of noncommutative submanifold. In the case of manifolds, we consider several conditions on Dirac operators which turn embedded submanifolds into isometric submanifolds. We then suggest a definition of spectral subtriple based on the notion of submanifold algebra and the already existing notions of Riemannian, isometric, and totally geodesic morphisms. We have shown that our definitions work at least in some relevant almost commutative examples.    
\end{abstract}

\section{Introduction}

 In a seemingly possible approach to quantum gravity one may construct geometrical objects that also capture common features of quantum theory. In doing so, one needs to consider geometric notions that extend beyond 
 the realm of 
 classical manifolds, for example, 
 manifolds with ``noncommutative coordinates''. There are many mathematical ways of realizing such a notion e.g. \cite{JMadore_1992, Madore-book}, \cite{Aschieri_2005, Aschieri-book}, \cite{Majid2000, Beggs-Majid-book}, 
 among these alternatives, the notion of spectral triple \cite{ConnesBook} aims at a vast generalization of Gel'fand-Na\u\i mark duality. Spectral triples can describe significant examples of  
 noncommutative manifolds, such as noncommutative tori, almost commutative manifolds etc.~\cite{Halleffect, AlmostNCG}.  
 These objects possess properties very similar to those of classical manifolds, such as Euler numbers, characteristic classes, index theory, etc. 
 However, several of the features present in the classical differential geometric level, are not yet fully available, for example, suitable notions of morphism, subobject and quotient object for spectral triples, are still under active investigation.

\medskip

In usual commutative differential geometry, there are different notions of submanifold corresponding to several choices of classes of differentiable maps with different ``rigidity'' (smooth, metric preserving, geodesic preserving). In some cases, such as spinorial Riemannian manifolds, even the notion of morphism is not clear: one does not have a ``natural'' way to lift maps between manifolds to maps between spinor bundles. In noncommutative geometry the problems are further complicated by the fact that the notion of ``quantum subsystem'' is not yet well-defined, even the choice of a category of C*-algebras presents us with several alternatives ($*$-homomorphisms, completely positive maps, bimodules of certain types). 

\medskip

Notions of subobject and quotient object can be deduced from a definition of morphism in a category. Some naive categories of spectral triples were proposed in \cite{Paolo2005, BCL11, Bertozzini_2012}, leaving the appropriate choice of morphism as an open question, nevertheless, some application of this approach in AF algebras can be found in the literature \cite{Flo-Gho, Contiquantum}. 
A much more technically sophisticated definition of morphism, based on Kasparov KK-theory, utilises suitable bimodule correspondences (equipped with extra smoothness and connection structures) \cite{Mesland2012, Mesland2014} and is at present the leading candidate for the discussion of morphisms between noncommutative spectral triples. Quotient spectral triples and, more recently, spectral subtriples have also been investigated along these lines of approach \cite{Kaad-Suij, Suij-Verh, Suij-Verh2}. 

\medskip 

The development of Connes' noncommutative differential geometry (based on Dirac operators and quantum calculus of forms) is not parallel with that of classical differential geometry and hence, within the formalism of spectral triples, most of the instruments readily available in usual differential geometry, e.g.~vector fields, Lie derivatives, curvatures, holonomies, cannot be immediately utilised or do not have a straightforward counterpart yet. However, in the tradition of derivation-based differential calculus \cite{DV-I,DV-M-II,DV-M}, noncommutative differential submanifolds (and quotient manifolds) have been defined \cite{doi:10.1063/1.531522}, in analogy with immersions (and submersions) of manifolds, via submanifold (and quotient manifold) algebras, that have been recently further investigated in \cite{SIGMA16}. 

\medskip 

In the present work, we incorporate the notion of submanifold algebra \cite{doi:10.1063/1.531522, SIGMA16} in the context of almost commutative spectral triples and, adopting some ideas from \cite{Paolo2005}, we put forward several tentative definitions of isometric spectral subtriples in the attempt to abstract the differential geometric features differentiating between global metric isometries and local totally geodesic or Riemannian isometries. In order to avoid the well-known problems with the definition of pull-back of spinor fields, we utilise the framework of Riemannian spectral triples \cite{LORD2012} motivated by the properties of the Hodge de Rham Dirac operator. In view of almost commutative examples we also suggest a modified definition of metric Connes' isometry exploiting a factorisation of the Grassmann algebra induced by the orthogonal decomposition of tanget space. Then we explore some of the relations among the different types of subtriples introduced.

\bigskip 

The paper is organized as follows. In section \ref{sub}, we first introduce the elementary definitions of naive spectral triple as strictly necessary in this paper (hence avoiding most of the technical axioms related to real structure and reconstruction theorems); then we give  the general definition of Riemannian, 
isometric embeddings  
between Riemannian manifolds together with their more restrictive cases of totally geodesic (respectively metric) isometric embeddings; 
following \cite{doi:10.1063/1.531522,SIGMA16}, we proceed to introduce submanifold algebras; 
and finally we describe the Hodge de Rham spectral triples originating from a Riemannian oriented compact manifold. 

\medskip 

In the main section \ref{Mor} we attempt to merge the submanifold algebra condition (specifying the ``smoothness/regularity'' of the morphism) with some naive ``metric'' notions of morphism of spectral triples in order to define isometric spectral subtriples and discuss some of the most immediate consequences of such definitions. 

\medskip 

In the final section \ref{Iso} we further investigate some of the relations among the different types of isometric subtriples previously introduced.  

\bigskip 

The approach currently taken in the present work has some quite strong limitations (that, of course, we plan to overcome in the near future): 
\begin{itemize}
\item 
the notions of spectral subtriples here considered are, for now, all based on the existence of unital $*$-homomorphisms between C*-algebras; although this (as shown by the examples provided) is perfectly fine in commutative and almost commutative cases, it becomes a too strong requirement when spectral triples over simple noncommutative algebras are considered; the utilisation of Hilbert bimodules (as in \cite{Mesland2014} and \cite{Suij-Verh2}), identifying the suitable classes of bimodules responsible for the alternative isometric subtriple conditions, should resolve this problem; 
\item 
in a perfectly similar way, derivations, as utilised now in derivation-based differential calculus and specifically here in the definition of submanifold algebra, are way too restrictive for fully noncommutative algebras: derivations of a noncommutative algebra do not form a bimodule over the algebra; an appropriate noncommutative generalization of the bimodule of vector fields on a manifold will have to be based on right/left derivations (see \cite{borowiec} and also \cite{chatchai}) and hence we expect that also the notion of submanifold algebra and smoothness of morphisms between spectral triples, will have to undergo similar reformulations (that we will pursue elsewhere); 
\item 
our definitions of isometric spectral subtriples are, for now, crucially motivated by the Hodge de Rham Riemannian case and hence they should apply to the Riemannian spectral triples of \cite{LORD2012}; we did not examine here which modification might take place in the case of the more usual setting of spinorial (Atiyah Singer) Dirac spectral triples;  

\item even with the above-mentioned limitations, most of the discussion provided here will be concerned with, and supported by, examples where there is a ``bounded'' Hilbert space map (usually a co-isometry), spatially implementing  a ``restriction'' homomorphism (instead of a conditional expectation), and hence dealing with situations that in Riemannian geometry correspond essentially to embeddings of families of connected components, without dimension change (but see remark~\ref{rem: Rsub} and the itemized points after definition~\ref{def: iso} for some preview on a more general setting); 

\item 
in this work we have been able to explore only some of the most elementary relations between alternative formalisations of naive isometric embedding of spectral triples; much more remains to be said, in particular general theorems of reconstruction for (isometric) embeddings and other morphisms of commutative spectral triples are still missing. 
\end{itemize}

\section{Spectral Triple of an Isometric Submanifold}
\label{sub}

We quickly recall the basic notion of naive spectral triple in order to set up our notation in view of the discussion of morphisms and subtriples. 

\begin{Def} \label{def: st}
A (naive) {\bf unital spectral triple} $(\mathcal{A},\mathcal{H},\mathcal{D})$ consists of: 
\begin{itemize}
\item 
a complex Hilbert space $\mathcal{H}$,
\item 
a unital complex C*-algebra $\mathcal{A}$ faithfully represented $\mathcal{A}\xrightarrow{\pi}\mathcal{B}(\mathcal{H})$ by bound\-ed operators acting on $\mathcal{H}$, 
\item
a (possibly unbounded) {\bf Dirac operator} $\mathcal{D}:\Dom(\mathcal{D})\to\mathcal{H}$ such that: 
\begin{itemize}
\item 
$\mathcal{H}^1:=\Dom(\mathcal{D})\subset\mathcal{H}$ is dense in $\mathcal{H}$,  
\item 
$\mathcal{D}$ is self-adjoint: $\mathcal{D}^*=\mathcal{D}$, 
\item 
for all $x\in \mathcal{A}$, $\pi(x) (\mathcal{H}^1)\subset \mathcal{H}^1$, 
\item 
the unital $*$-subalgebra $\mathcal{A}^1:=\{x\in\mathcal{A} \ | \ [\mathcal{D}, \pi(x)]\in \mathcal{B}(\mathcal{H}^1;\mathcal{H}) \}$ is norm dense in $\mathcal{A}$.
\end{itemize}
\end{itemize}
Given a spectral triple $(\mathcal{A},\mathcal{H},\mathcal{D})$, its associated {\bf complex Clifford algebra} is:
\begin{equation*}
\CCl_{\mathcal{D}}(\mathcal{A}):=\overline{\Span (\{\pi(x_0)[\mathcal{D},\pi(x_1)]\cdots [\mathcal{D},\pi(x_n)] \ | \ n\in\mathbbm{N}, \ x_0,\dots,x_n\in\mathcal{A}^1\})}. 
\end{equation*}
The {\bf Dirac flux} of the spectral triple is defined as the one-paramenter group: 
\begin{equation*}
\mathbbm{R} \ni t\mapsto \alpha^D_t(T):=\exp(iDt) T \exp(-iDt) \in \mathcal{B}(\mathcal{H}), \quad \forall T\in \mathcal{B}(\mathcal{H}), 
\end{equation*}
the unital $*$-subalgebra of $\alpha^D$-smooth elements of $\mathcal{A}$ is defined as: 
\begin{equation*}
\mathcal{A}^\infty:=\{x\in\mathcal{A} \ | \ t\mapsto\alpha^D_t(\pi(x)) \ \text{is $C^\infty(\mathbbm{R};\mathcal{B}(\mathcal{H}))$} \}.
\end{equation*}
The spectral triple is said to be {\bf smooth} if $\mathcal{A}^\infty$ is norm dense in $\mathcal{A}$ and $\mathcal{H}^\infty$ is dense in $\mathcal{H}$, where: 
\begin{equation*}
\mathcal{H}^\infty:=\bigcap_{n=1}^\infty\Dom(\mathcal{D}^n). 
\end{equation*}
\end{Def}
Notice that the smoothness of the spectral triple already entails the condition that $\mathcal{A}^1$ is norm dense in $\mathcal{A}$ as well as the density of $\Dom(\mathcal{D})$ in $\mathcal{H}$. 

\medskip 

Spectral triples might (and will) satisfy many more of the several additional requirements necessary for reconstruction theorems in \cite{Connes,LORD2012}, in the previous definition we introduced only a minimal set of axioms that are absolutely necessary for our discussion. 

\medskip 

It is important to stress that, in the definition above, we did not ask for the irreducibility of the representation $\mathcal{A}\xrightarrow{\pi}\mathcal{B}(\mathcal{H})$, since later in subsection \ref{sec: HdR} we will need to consider Hodge de Rham spectral triples, where the Clifford algebra $\CCl_{\mathcal{D}}(\mathcal{A})$ has a huge anti-isomorphic commutant $\CCl_{\mathcal{D}}(\mathcal{A})'$, since $\mathcal{H}$ is equipped with a cyclic separating vector, for the von Neumann Clifford algebra $\CCl_{\mathcal{D}}(\mathcal{A})''$. 

\medskip 

The somehow idiosyncratic notion of ``Dirac flux'' is essential in order to identify a ``smooth'' subalgebra $\mathcal{A}^\infty$ of $\mathcal{A}$ on which the submanifold algebra condition in \ref{sec: sub} will have to be applied later on (see definition \ref{def: iso}). 

\medskip 

For a C*-algebra $\mathcal{A}$, let us denote by $\mathscr{S}(\mathcal{A})$ the set of its states and by $\mathscr{P}(\mathcal{A})$ the subset of its pure states. Any naive spectral triple $(\mathcal{A},\mathcal{H}, \mathcal{D})$ induces on the set $\mathscr{S}(\mathcal{A})$ of ($\mathscr{P}({\mathcal{A}})$ pure) states the following {\bf Connes' distance}:\footnote{
This is actually a pseudo-metric since $d_{\mathcal{D}}(\rho,\sigma)$ can attain the value $+\infty$.} 
\begin{equation}\label{eq: Connes-d}
d_{\mathcal{D}}(\rho,\sigma):=\sup\{|\rho(x)-\sigma(x)| \ | \ \|[\mathcal{D},\pi(x)]\| \leq 1 \}, \quad \forall \rho,\sigma\in\mathscr{S}(\mathcal{A}). 
\end{equation}

If $(\mathcal{A}_1,\mathcal{H}_1,\mathcal{D}_1)$ and $(\mathcal{A}_2,\mathcal{H}_2,\mathcal{D}_2)$ are two unital commutative spectral triples satisfying the axioms of the reconstruction theorem \cite{Connes}, then one can construct associated Riemannian compact spin manifolds $(N,g)$ and $(M,g')$, respectively. 

The question is what are the conditions needed for the spectral triples so that $N$ is a submanifold of $M$, and how to generalize these conditions to noncommutative cases. Let us first investigate the relation between the canonical spectral triples of the two compact orientable Riemannian manifolds when $N$ is an $n$-dimensional submanifold immersed into an $m$-dimensional manifold $M$. 
In this paper we will be interested in isometric immersions. There are several different types of isometric immersion between compact orientable Riemannian manifolds, depending on the level of ``rigidigy'' imposed. The most general one is the following Riemannian (local) isometry. 

\begin{Def} 
A function $f:N \rightarrow M$ is a Riemannian isometric immersion if, for all local charts $(U,\psi_N)$ of $N$ and  $(V,\psi_M)$ of $M$, the map 
\begin{equation*}
\tilde{f}:=\psi_M\circ f \circ \psi_N^{-1}
\end{equation*}is the solution of the nonlinear partial differential equations
\begin{equation}
(\tilde{f}_*)^a_{
~i}(\tilde{f}_*)^b_{~j} g'_{ab}(\tilde{f}(x))=g_{ij}(x), \label{isom}
\end{equation}
for all $x=(x^1,...,x^n)\in \psi_N(U)$, where $(\tilde{f}_*)^a_{~i}=\partial \tilde{f}^a/\partial x^i$, $a,b$ and $i,j$ are indices of local coordinates on $M$ and $N$, of which metric tensors are $g'$ and $g$, respectively, where we use Einstein summation convention to simplify the expression. 
\end{Def}

Equivalently, a Riemannian (local) isometry is an immersion $f:N\to M$ such that $f^*(g')=g$. 
Notice also that the above condition for Riemannian (local) isometric immersion is equivalent to requiring that the differential (tangent map) $T_pN\xrightarrow{f_*} T_{f(p)} M$ is fibrewise a linear isometry, for all $p\in N$, and hence $TN\xrightarrow{f_*}TM$ is a morphism of Hermitian vector bundles. 

\medskip 

Note that, by definition, $f:N\to M$ is a local isometry, but this does not always imply that $f$ is preserving the geodesic distances. 
However, the local isometry condition does provide the following properties \cite{Peter} (that will be revisited in the noncommutative case in section \ref{Iso}):
\begin{description}
    \item{$i)$}~~ $f$ maps geodesics of $N$ to geodesics of $f(N)$ (that do not always coincide with geodesics of $M$), 
    \item{$ii)$}~ $f$ is geodesic distance decreasing,  
    \item{$iii)$} if $f$ is also a bijection, it is geodesic distance preserving.
\end{description}

In addition, a Riemannian isometric immersion $f$ induces the $f$-pull-back vector bundle $f^*(TM)$ (of the tangent bundle $TM$ of $M$), which is a bundle over $N$ whose fibre at $p\in N$ is $T_{f(p)}M$. Such pull-back bundle admits (see \cite[section 1.1]{Subman-intro}) the orthogonal decomposition 
\begin{align}
\label{Decom}
    f^*(TM)&=  f^*(f_*(TN)) \oplus f^*((f_*(TN))^\perp) \nonumber\\
         &\simeq TN\oplus\mathcal{N}_fN,
\end{align}
where the tangent map $f_*: TN\to TM$ is a linear isometric isomorphism onto its image and we denote by  $\mathcal{N}_fN:=f^*(f_*TN)^\perp$ the normal bundle. 

\begin{Def}
We will say that a Riemannian (local) isometry $N\xrightarrow{f}M$ is a: 
\begin{itemize}
\item 
{\bf totally geodesic (local) isometry} if the restriction to $f(N)$ of the Levi-Civita connection $\nabla^M$ of $M$ has a diagonal form with respect to the orthogonal decomposition \eqref{Decom};
\item 
{\bf (smooth global) metric isometry} if it is preserving the geodesic distance betwen the two manifolds. 
\end{itemize}
\end{Def}
The Myers Steenrod theorem (see for example \cite[theorem 5.6.15]{Peter}) assures that surjective global isometries are always smooth and hence Riemannian isometries. Since the theorem requires surjectivity, here we assume the smootheness a priori in the definition of global metric isometry. 

Totally geodesic (local) isometries map geodesics curves of $N$ onto geodesic curves of $M$, but does not necessarily preserve the geodesic distances (as can be easily seen mapping a circle into a geodesic helix winding more than one time around a flat 2-torus). 

Riemannian isometries are not necessarily totally geodesic, as can be seen mapping a circle onto a non-maximal circle on a 2-sphere.  

In practice we have the following strict implications that somehow we expect to survive in the noncommutative setting: 
\begin{equation*}
\xymatrix{
\boxed{\text{\begin{minipage}{2.6cm} \begin{center}
Riemannian Isometry 
\end{center}
\end{minipage}
}} & \ar@{=>}[l] \boxed{\text{
\begin{minipage}{3.0cm} \begin{center}
Totally Geodesic \\ Isometry
\end{center}
\end{minipage}
}} & \ar@{=>}[l] \boxed{\text{
\begin{minipage}{2.8cm}\begin{center}
Smooth Global \\ Metric Isometry
\end{center}
\end{minipage}
}}
}
\end{equation*}

\subsection{Submanifold Algebra} \label{sec: sub}

Let $\varphi: \mathcal{A} \rightarrow \mathcal{B}$ be a surjective homomorphism between associative algebras over a field $\mathbbm{K}$. 
The set ${\rm Der}(\mathcal{A})$ denotes the collection of derivations $D$, which are $\mathbbm{K}$-linear maps satisfying 
\begin{equation*}
    D(ab)=(Da)b+a(Db), ~~a,b \in \mathcal{A}.
\end{equation*}
Given the ideal $\mathcal{I}={\rm ker}~\varphi$ we can define a set 
\begin{equation*}
{\rm Der}_\varphi(\mathcal{A}):=\{D\in {\rm Der}(\mathcal{A}) \ | \ Da \in \mathcal{I}, \ \forall a\in \mathcal{I} \},
\end{equation*}
and then, since $\mathcal{A} / \mathcal{I} \simeq \mathcal{B}$, as in \cite{SIGMA16}, we have the linear map 
\begin{equation*}
\varphi_*:{\rm Der}_\varphi(\mathcal{A})\rightarrow {\rm Der}(\mathcal{B}), 
\end{equation*}
associating $D\in {\rm Der}_\varphi(\mathcal{A})$ to the derivation of $\mathcal{B}$ corresponding to the derivation $\hat{D}\in{\rm Der}(\mathcal{A} / \mathcal{I})$, where $\hat{D} (a+\mathcal{I}):=Da +\mathcal{I}$, is well-defined for all $a+\mathcal{I}\in \mathcal{A} / \mathcal{I}$. 

\begin{Def} (see \cite{SIGMA16})  
$\mathcal{B}$ is a submanifold algebra of $\mathcal{A}$ if $\varphi_*$ is surjective.
\end{Def}
Suppose $f$ is an isometric immersion, since $N$ is compact, $f$ is also an embedding with closed image. 
By Theorem 11 of \cite{SIGMA16}, one has a surjective homomorphism $\varphi:C^\infty(M)\rightarrow C^\infty(N)$ which is equal to $f^*$ \cite{doi:10.1063/1.531522}. 
The map $\varphi=f^*$ induces the following short exact sequence 
\begin{equation}
    0 \rightarrow \mathcal{I} \rightarrow C^\infty(M) \xrightarrow{\varphi} C^\infty(N) \rightarrow 0, \label{SplitEx}
\end{equation}
where $\mathcal{I}$ is the ideal of smooth functions vanishing on $f(N)$. 

In addition, the map $\varphi_*:{\rm Der}_\varphi(C^\infty(M)) \rightarrow {\rm Der}(C^\infty(N)) $ is also surjective, where
\begin{equation*}
    {\rm Der}_\varphi(C^\infty(M))=\{X\in {\rm Der}(C^\infty(M))~|~ Xg \in \mathcal{I}, \ \forall g\in \mathcal{I}\}.
\end{equation*}
Hence, the algebra $C^\infty(M)/\mathcal{I}\cong C^\infty(N)$ is a submanifold algebra of $C^\infty(M)$. 

In the following, with some abuse of notation, for uniformity, we will write 
\begin{equation*}
\Gamma(TM_\varphi):= \text{Der}_\varphi(C^\infty(M)),  
\end{equation*}
where $\Gamma(-)$ for us denotes the family of smooth sections of a ``smooth bundle''.\footnote{
Notice that the ``spectrum'' $TM_\varphi$ of the (non-projective) $C^\infty(M)$-module $\text{Der}_\varphi(C^\infty(M))$ is not a locally trivial vector bundle over $M$ and hence Serre-Swan theorem (see for example~\cite[section~2.3]{GBVF}) cannot be invoked for its construction: its fibers are $(TM_\varphi)_p \simeq T_pM$ when $p\in M-f(N)$ and $(TM_\varphi)_{f(q)}\simeq T_qN$ whenever $p=f(q)\in f(N)$; we do not enter here into the technicalities involved in the definition of topology and smooth structure on $T_\varphi M$. 
} 

To see the effect of the derivation $X\in {\rm Der}_\varphi(C^\infty(M))$ in local coordinates, let $g\in \mathcal{I}$ and suppose $\{y^1, \ldots, y^n\}$ and $\{x^1, \ldots , x^m\}$ are local coordinates at $p\in N$, and $f(p)\in M$, respectively. Since $f$ is an embedding, one can choose coordinates such that the first $n$ coordinates in $f(M)$ are functions of $y^i$'s, while the rest do not depend on $y^i$'s. Consider 
\begin{align}
    0=&Xg|_{f(p)}=\left. X^i(f(p))\frac{\partial g}{\partial x^i}\right|_{f(p)}\nonumber\\
    =& \left. X^{n+\alpha}(f(p))\frac{\partial g}{\partial x^{n+\alpha}}\right|_{f(p)}, \label{TMpi}
\end{align}
where $\alpha=1, \ldots , m-n$. The derivatives $\partial g/\partial x^i,~i=1,\ldots, n$, vanish, since $g=0$ on $f(N)$. For $i=n+\alpha$, with $\alpha=1,\dots, m-n$, the derivatives do not necessarily vanish, the condition $Xg|_{f(p)}=0$ implies $X^{n+\alpha}(f(p))=0$.

\begin{Lem}
\label{Tangent}
Let $(M,g)$ be a compact Riemannian manifold, and $C^\infty(N)$ be a submanifold algebra of $C^\infty(M)$. 
The module of smooth sections of the induced tangent bundle $f^*(TM)$ admits the decomposition
\begin{equation}
  \Gamma(f^*(TM)) = \varphi_*(\Gamma(TM_\varphi)) \oplus \Gamma(\mathcal{N}_fN)\simeq \Gamma(TN)\oplus \Gamma(\mathcal{N}_fN),  \label{Dsum}
\end{equation} 
where $f$ is a proper embedding, and $\mathcal{N}_fN$ is the normal bundle associated to $f$. 
\end{Lem}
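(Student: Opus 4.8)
The plan is to establish the direct sum decomposition of $\Gamma(f^*(TM))$ fibrewise and then globalise it to the module level, exploiting the local coordinate computation in \eqref{TMpi} together with the orthogonal decomposition \eqref{Decom} already available from the Riemannian structure. First I would recall that, since $f$ is a proper embedding with closed image (guaranteed by compactness of $N$), the pull-back bundle $f^*(TM)$ is a genuine locally trivial vector bundle over $N$ whose fibre at $p$ is $T_{f(p)}M$. The Riemannian metric $g$ on $M$ restricts fibrewise to give, at each $p$, the orthogonal splitting $T_{f(p)}M \simeq f_*(T_pN)\oplus (f_*(T_pN))^\perp$, which is precisely \eqref{Decom} read pointwise. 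The content of the lemma is therefore to identify the first summand $f_*(T_pN)$ with the image of $\varphi_*(\Gamma(TM_\varphi))$ evaluated at $p$, and the second with $\Gamma(\mathcal{N}_fN)$.

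The key step is the identification $\varphi_*(\Gamma(TM_\varphi))\simeq\Gamma(TN)$, for which I would lean on the local coordinate analysis preceding the statement. Choosing adapted coordinates $\{x^1,\dots,x^m\}$ near $f(p)$ so that $f(N)$ is cut out by $x^{n+1}=\cdots=x^m=0$ and the first $n$ coordinates restrict to coordinates $\{y^1,\dots,y^n\}$ on $N$, the computation \eqref{TMpi} shows that any $X\in\mathrm{Der}_\varphi(C^\infty(M))$ satisfies $X^{n+\alpha}(f(p))=0$ for $\alpha=1,\dots,m-n$. Thus the restriction $X|_{f(N)}$ has only tangential components and descends, under $\varphi_*$, to a well-defined section of $TN$; conversely every section of $TN$ lifts (using a partition of unity and the surjectivity of $\varphi_*$ asserted in the submanifold-algebra hypothesis and established via Theorem~11 of \cite{SIGMA16}) to such an $X$. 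This gives the module isomorphism $\varphi_*(\Gamma(TM_\varphi))\simeq\Gamma(TN)$, and the complementary summand is by definition $\Gamma(\mathcal{N}_fN)=\Gamma(f^*(f_*TN)^\perp)$.

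To assemble the global decomposition \eqref{Dsum} I would invoke Serre-Swan: since $f^*(TM)$, $TN$, and $\mathcal{N}_fN$ are all locally trivial vector bundles over the compact manifold $N$, the pointwise orthogonal splitting, being smooth in $p$ (the orthogonal projection onto $f_*(T_pN)$ depends smoothly on $p$ because $g$ and $f$ are smooth), promotes to a direct sum decomposition of the corresponding finitely generated projective $C^\infty(N)$-modules of sections. Care is needed here because $TM_\varphi$ itself is \emph{not} a locally trivial bundle, as the footnote warns — its fibres jump dimension across $f(N)$ — so the Serre-Swan argument cannot be applied to $\Gamma(TM_\varphi)$ directly; it is applied only after passing through $\varphi_*$ to the honest bundle $TN$ over $N$.

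The main obstacle I anticipate is precisely this mismatch between the algebraic object $\mathrm{Der}_\varphi(C^\infty(M))$, which is a non-projective $C^\infty(M)$-module with singular fibre behaviour along $f(N)$, and the geometric bundle $TN$ over $N$ that it is meant to produce. The delicate point is verifying that $\varphi_*$ is not merely surjective as a map of modules but that its image, viewed as sections over $N$, exhausts $\Gamma(TN)$ with the correct smooth structure, and that the normal complement is smooth and transversal to it; I would handle this by working in adapted tubular-neighbourhood coordinates and checking the smoothness of the projections there, rather than attempting a coordinate-free argument, since the failure of local triviality for $TM_\varphi$ makes a purely functorial Serre-Swan argument unavailable on the $M$ side.
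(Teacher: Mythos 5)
Your proposal is correct and follows essentially the same route as the paper's proof: both rest on the orthogonal decomposition \eqref{Decom} together with the coordinate computation \eqref{TMpi} to show that elements of $\mathrm{Der}_\varphi(C^\infty(M))$ have vanishing normal components along $f(N)$ and hence identify $\varphi_*(\Gamma(TM_\varphi))$ with the tangential summand $\Gamma(TN)$. The only divergence is cosmetic — for the reverse inclusion the paper explicitly checks that $f_*X$ annihilates the ideal $\mathcal{I}$ in adapted coordinates, whereas you invoke the surjectivity of $\varphi_*$ from the submanifold-algebra hypothesis, and you add some welcome care about Serre--Swan and the non-projectivity of $TM_\varphi$ that the paper leaves implicit.
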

\begin{proof}
By \cite[theorem~11]{SIGMA16}, a submanifold algebra gives rise to a proper embedding \hbox{$N\xrightarrow{f} M$.} The module of section of the induced tangent bundle $f^*(TM)$ admits the decomposition \eqref{Decom}:
\begin{equation*}
    \Gamma(f^*(TM))= 
     \Gamma(f^*(f_*(TN))\oplus\Gamma(\mathcal{N}_fN)
    {\simeq \Gamma(TN)}\oplus\Gamma(\mathcal{N}_fN).
\end{equation*}
It is clear from \eqref{TMpi} that the element of $\Gamma(f^*TM_\varphi)$ does not belong to the normal bundle, i.e. $\Gamma(f^*TM_\varphi) \subset \Gamma(TN)$ (up to canonical isomorphism). 
Let $p \in N$ and employ the coordinates used in \eqref{TMpi}. 
Let $X \in \Gamma(TN)$, and $Y=f_*X \in \Gamma(f_*TN)$. Suppose $g\in \mathcal{I}$, then 
\begin{align*}
    Y|_{f(p)} g =&~ (f_*X)_{f(p)}g \\ 
              =&~ \left.\frac{\partial y^J}{\partial x^I}X^{I}\frac{\partial g}{\partial y^J}\right|_{f(p)} \\
              =&~0,
\end{align*}
where $I,J=1, \ldots , n$. Recall that $g$ vanishes on $f(N)$ and hence $X\in \Gamma(f^*TM_\varphi)$ which makes $\Gamma(f^*TM_\varphi) \simeq\Gamma(TN) = \varphi_*\Gamma(TM_\varphi$).  
\end{proof}
Once the direct sum decomposition~\eqref{Dsum} of pre-Hilbertian \hbox{C*-mod}\-ules over $C^\infty(N)$ is in place, the Fermionic second quantization and complexification functors, provide the following canonical factorizations of the Grassmann algebras and of the Clifford algebras of the $f$-pull-back of $TM$:  
\begin{align*}
\Omega_{\mathbbm{C}} (\Gamma(f^*(TM)))&= \Omega_{\mathbbm{C}}(\varphi_*(\Gamma(TM_\varphi))) \otimes_{C^\infty(N)} 
\Omega_{\mathbbm{C}}(\Gamma(\mathcal{N}_fN))
\\
&\simeq\Omega_{\mathbbm{C}}(\Gamma(TN))\otimes_{C^\infty(N)} \Omega_{\mathbbm{C}}(\Gamma(\mathcal{N}_fN)), 
\\ \\
\CCl(\Gamma(f^*(TM)))&=\CCl(\varphi_*(\Gamma(TM_\varphi)))\otimes_{C^\infty(N)}\CCl(\Gamma(\mathcal{N}_fN))
\\
&\simeq \CCl(\Gamma(TN))\otimes_{C^\infty(N)}\CCl(\Gamma(\mathcal{N}_fN)). 
\end{align*}

\medskip 

Whenever the manifold $M$ is equipped with a Riemannian metric tensor $g_M$, the inner product defines an isometric isomorphism $TM \xrightarrow{\Lambda_{g_M}} T^*M$ between tangent and cotangent bundles, therefore one has a ``dual version'' of the previous tensorization of Clifford algebras, that will be used in the next section: 
\begin{align*}
C^\infty(N)\otimes_{C^\infty(M)}\CCl(\Gamma(T^*M))
&\simeq 
\CCl(\Gamma(f^*(T^*M))) 
\\
&
\simeq \CCl(\Gamma(T^*N))\otimes_{C^\infty(N)}\CCl(\Gamma(\mathcal{N}^*_fN)). 
\end{align*}

\subsection{Hodge de Rham Spectral Triples} \label{sec: HdR}

A very special class of naive spectral triples (definition~\ref{def: st}) originates from compact orientable Riemannian manifolds $(M,g)$ and has been axiomatically studied in \cite{LORD2012}, where conditions for a full reconstruction theorem have been produced (see also \cite{FGR}, for an earlier suggestion in this direction and \cite{Twisted} for further investigation): 
\begin{Def}
The \textbf{Hodge de Rham spectral triple} $(\mathcal{A}_M,\mathcal{H}_M,\mathcal{D}_M)$ of the compact orientable Riemannian manifold $(M,g_M)$ consists of: 
\begin{itemize}
\item 
$\mathcal{A}_M:= C(M)$, the unital C*-algebra of continuous functions on $M$,
\item 
$\mathcal{H}_M:=L^2(\Omega_{\mathbbm{C}}(M))$, the Hilbert space obtained by completion of the comp\-lex\-i\-fied Grassmann algebra 
$\Omega_{\mathbbm{C}}(M):=\Omega_{\mathbbm{C}}(\Gamma(T^*M))$ of smooth differential forms on $M$ under the $\mathbbm{C}$-valued inner product $\langle \rho | \sigma\rangle_{\mathbbm{C}}:=\int_M \overline{(\star \rho)}\wedge \sigma$, for $\rho,\sigma\in \Omega_{\mathbbm{C}}(M)$, where $\star\rho$ denotes the Hodge dual of $\rho$,  

\item 
$\mathcal{D}_M$, the Hodge de Rham Dirac operator of $M$, that is the closure of the densely defined operator $d_M+d^*_M$, where $d_M$ denotes the, densely defined, exterior differential of $M$ and $d_M^*$ its Hilbert space adjoint on $\mathcal{H}_M$.   
\end{itemize} 
\end{Def}

The inner product $g_M$ defines an isometric isomorphism $TM \rightarrow T^*M$. 
The complexified Clifford algebra $\CCl(M):=\CCl(\Gamma(T^*M))$ of a Riemannian manifold acts on the $C^\infty(M)$-pre-Hilbert C*-module of complexified differential forms $\Omega_{\mathbbm{C}}(M)$ by endomorphisms $\CCl(M)\xrightarrow{c_M}\text{End}_{C^\infty(M)}(\Omega_{\mathbbm{C}}(M))$ 
with the action generated by creation and annihilation operators
\begin{equation*}
c_M(v)\cdot\xi=v\wedge\xi+v\mathbin{\lrcorner}\xi, 
\quad \forall v\in \mathbbm{C}\otimes_{\mathbbm{R}} \Gamma(T^* M), \ \xi\in\Omega_{\mathbbm{C}}(M), 
\end{equation*}
which satisfy the Clifford relations
\begin{equation*}
c_M(v)^2\cdot\xi=v\wedge v\mathbin{\lrcorner}\xi+v\mathbin{\lrcorner}v\wedge\xi=\|v\|^2\xi, 
\end{equation*}
and for the commutators with the Hodge de Rham Dirac operator (see for example~\cite[proposition 3.8]{BGV}) we have $[\mathcal{D}_M,x]=c_M(d_M x)$, for all $x\in C^\infty(M)$. 
Using the notation in definition~\ref{def: st}, we have $\CCl_{\mathcal{D}_M}(\mathcal{A}_M)=\overline{c_M(\CCl(M)}\subset \mathcal{B}(\mathcal{H}_M)$. 

\begin{Thm}
\label{Lhimma}
Suppose $N \xrightarrow{f} M$ is a smooth function between compact orientable Riemannian manifolds with metric tensors $g_M$ and $g_N$ defined on $M$ and $N$, respectively. Let 
$(\mathcal{A}_M,\mathcal{H}_M,\mathcal{D}_M)$ and $(\mathcal{A}_N,\mathcal{H}_N,\mathcal{D}_N)$
be the two Hodge de Rham spectral triples of $M$ and $N$, then the following are equivalent:
\begin{enumerate}
    \item[i)] $f$ is an open Riemannian isometric embedding.
    \item[ii)] $C^\infty(N)$ is a submanifold algebra of $C^\infty(M)$ such that the exact sequence \ref{SplitEx} splits, and hence $C^\infty(M)\simeq C^\infty(f(N))\oplus C^\infty(M-f(N))$, and we have 
    \begin{equation*}
 \|\, [ \mathcal{D}_N,\varphi(h)]\, \|_{\mathcal{B}(\mathcal{H}_N)}
=    \|\, [  \mathcal{D}_M,h]\, \|_{\mathcal{B}(\mathcal{H}_M)}, 
 \end{equation*}
 for all $h\in C^\infty(f(N))\oplus K(M-f(N))$, where $\varphi:=f^*$ and we denote the set of constant functions on the complement of $f(N)$ by $K(M-f(N)):=\mathbbm{C}\cdot  1_{M-f(N)}$.   
\end{enumerate}
\end{Thm}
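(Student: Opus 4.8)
The plan is to reduce both implications to a single commutator-norm formula and then, for the harder direction, to a statement about two Riemannian metrics on one manifold sharing the same Lipschitz seminorm. First I would establish the computation underlying everything. Using $[\mathcal{D}_M,x]=c_M(d_M x)$ together with the Clifford relation $c_M(v)^2=\|v\|^2$, the operator $c_M(d_M h)$ acts fibrewise as Clifford multiplication, so its operator norm on $\mathcal{H}_M=L^2(\Omega_{\mathbb{C}}(M))$ is the essential supremum of the pointwise norms; by continuity and compactness of $M$ this yields
\[
\|[\mathcal{D}_M,h]\|_{\mathcal{B}(\mathcal{H}_M)}=\max_{p\in M}\|d_M h|_p\|_{g_M},
\]
and likewise for $N$. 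For real-valued $h$ this is immediate from self-adjointness of $c_M(d_M h)$, and the general case follows by splitting into real and imaginary parts; for the metric-recovery step below only real-valued functions are needed. Thus both sides of the norm equality in $ii)$ become maxima of pointwise differential norms.

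For $i)\Rightarrow ii)$ I would first note that an open embedding has open image $f(N)$, which is also closed by compactness of $N$, hence clopen; therefore $M=f(N)\sqcup(M-f(N))$ and $C^\infty(M)\simeq C^\infty(f(N))\oplus C^\infty(M-f(N))$, exactly the splitting of \eqref{SplitEx}, while Theorem~11 of \cite{SIGMA16} applied to the (proper) embedding $f$ gives that $C^\infty(N)$ is a submanifold algebra. For the norm identity, the Riemannian isometry $f\colon N\to f(N)$ pulls back forms isometrically for the inner product $\int_M\overline{\star\rho}\wedge\sigma$, so it induces a unitary $U\colon\mathcal{H}_N\to L^2(\Omega_{\mathbb{C}}(f(N)))$ intertwining $\mathcal{D}_N$ with the Hodge de Rham operator of the clopen piece $f(N)$ and sending $\varphi(h)=h\circ f$ to multiplication by $h|_{f(N)}$. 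Writing $h=h_1+c\,1_{M-f(N)}$ with $h_1\in C^\infty(f(N))$ and $c$ constant, $h$ is locally constant on $M-f(N)$, so $[\mathcal{D}_M,h]$ vanishes on the corner over $M-f(N)$ and equals its corner over $f(N)$; transporting this corner through $U$ gives the required equality of operator norms.

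For $ii)\Rightarrow i)$ the splitting again forces $f(N)$ to be clopen, and by Theorem~11 of \cite{SIGMA16} $f$ is a proper embedding, so $f\colon N\to f(N)$ is a diffeomorphism onto an open set, i.e.\ an open embedding; what remains is to prove it is a Riemannian isometry. I would transport the ambient metric to $N$ by setting $\tilde g:=f^*g_M$, so that $f_q^*$ is a linear isometry on cotangent spaces and $\|d(h\circ f)|_q\|_{\tilde g}=\|dh|_{f(q)}\|_{g_M}$. Restricting the norm equality to the case $c=0$, matching the two maxima through the bijection $f$, and invoking the commutator-norm formula, one obtains for every $u\in C^\infty(N)$ the identity
\[
\max_{q\in N}\|du|_q\|_{g_N}=\max_{q\in N}\|du|_q\|_{\tilde g},
\]
so the metrics $g_N$ and $\tilde g$ on the compact manifold $N$ induce the same Lipschitz seminorm $L(u)=\max_q\|du|_q\|$.

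The main obstacle is the final step: concluding $g_N=\tilde g$ from the equality of these seminorms. Here I would use the Kantorovich--Rubinstein/Connes recovery formula $d(p,p')=\sup\{|u(p)-u(p')|:L(u)\le 1\}$, valid because the $1$-Lipschitz functions for $d$ are exactly those with $L(u)\le1$ and $u=d(p,\cdot)$ saturates the bound; equality of the seminorms then forces $g_N$ and $\tilde g$ to have identical geodesic distance functions on each connected component. Finally, two Riemannian metrics with the same distance function coincide, since the identity map is a distance-preserving bijection and hence a smooth Riemannian isometry by the Myers--Steenrod theorem; therefore $\tilde g=g_N$, i.e.\ $f^*g_M=g_N$, and $f$ is a Riemannian isometric embedding, open by the above. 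The two genuinely nontrivial points are the passage from the fibrewise Clifford norm to the global operator norm, and the classical fact that the Lipschitz seminorm (equivalently the geodesic distance) determines the Riemannian metric — a difficulty the reduction to a single manifold carrying the two metrics $g_N$ and $\tilde g$ is designed to isolate and dispatch.
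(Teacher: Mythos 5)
Your proof is correct, and your forward implication is essentially the paper's: both reduce the commutator norm to the supremum of the pointwise norm of $d_Mh$ via the Clifford relation, observe that $h$ is locally constant on the clopen complement of $f(N)$, and transport the remaining corner through the unitary on $L^2$-forms induced by the isometry $N\to f(N)$. Where you genuinely diverge is in $ii)\Rightarrow i)$. The paper argues that the isometric commutator condition lets $\varphi$ extend to a continuous homomorphism of the complexified Clifford algebras, asserts that this makes $f$ a Riemannian isometry, and then establishes openness separately by contradiction, constructing a function whose gradient at some point of $M-f(N)$ exceeds the supremum of its tangential gradient on $f(N)$. You instead extract clopenness of $f(N)$ directly from the splitting hypothesis, pull the ambient metric back to $\tilde g=f^*g_M$ on $N$, and reduce everything to the classical fact that the Lipschitz seminorm $u\mapsto\max_q\|du|_q\|$ determines the metric, which you settle by the Kantorovich--Rubinstein/Connes distance-recovery formula followed by Myers--Steenrod applied to the identity map. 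Your route is more elementary and arguably more complete: the paper's Clifford-algebra step is terse (it does not explain why a norm equality on the commutators $[\mathcal{D},h]$ alone yields a homomorphism of the full Clifford algebras, nor why such a homomorphism is equivalent to $f^*g_M=g_N$), whereas your reduction to two metrics on a single compact manifold isolates exactly the classical input needed. What the paper's formulation buys in exchange is a statement (extension of $\varphi$ to Clifford algebras) phrased so as to suggest the noncommutative generalizations pursued in the later sections, where no underlying point space is available for your metric-transport argument.
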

\begin{proof}
Since the manifolds are compact Hausdorff, any smooth immersion is an embedding and hence the image is compact and closed. It follows that $f(N)$ is a disjoint union of connected components of $M$. The submanifold algebra condition is already valid, as for any embedded smooth submanifold. 

Since $\|[\mathcal{D}_M,h]\|=\|\text{grad}_M(h)\|$ and $h=g\oplus k\in C^\infty(f(N))\oplus K(M-f(N))$, one obtains that 
$\text{grad}_M(h)=\text{grad}_{f(N)}g=\text{grad}_Nf^*g=\text{grad}_N(\varphi(h))$.
Hence the norm equality of commutators holds. 

\medskip 

Since $\CCl_{\mathcal{D}_M}(C(M)) = \overline{c_M(\CCl(M))}$ as algebra of operators acting on $\mathcal{H}_M$, and since under this isomorphism $[\mathcal{D}_M,h]$ corresponds to the Clifford multiplication by $d_Mh\in\CCl(M)$, under the isometric commutators condition,  $\varphi$ extends to a continuous homomorphism between the complexified Clifford algebras of the Riemannian manifolds and hence it is a Riemannian isometry. If, by contradition, this Riemannian embedding is not open, we can construct a smooth function $h$ on $M$ with a given restriction to $f(N)$ with at least one point $p\in M-f(N)$ where $\|(\text{grad}_M\  h) (p)\|>\max_{q\in f(N)} \|(\text{grad}_{f(N)} \ h)(q)\|$ that contradicts the equality of the norms of commutators. 
\end{proof}
\begin{Rem}\label{rem: Rsub}
The previous result might seem somehow disappointing: it will actually imply that ``isometric spectral subtriples'' satisfying our point~d.~in definition~\ref{def: iso} in the commutative case (embedding of compact orientable Riemannian manifolds) exist only in the very trivial cases of Riemannian totally geodesic embeddings consisting of inclusions of families of connected components of a given Riemannian orientable manifold, in which cases the Hilbert space decomposes as the direct sum: $L^2(\Omega_{\mathbbm{C}}(M))\simeq L^2(\Omega_{\mathbbm{C}}(f(N)))\oplus L^2(\Omega_{\mathbbm{C}}(M-f(N)))$. 

\medskip 

The situation is anyway not so bad and one can achieve much more. Since the technical details involved are quite serious, we plan to examine the full implications of such lines in a separate study (that will necessarily require significant changes to the naive type of morphism of spectral triples used here in section~\ref{Mor}), and we only put-forward some general arguments that justify our position. 

\medskip 

Any compact Riemannian embedding $N\xrightarrow{f}M$ admits (see for example~\cite[section~6.2]{CS}) an open $r$-tubular neighborhood  $f(N)\subset T_r\subset M$ of $f(N)$ on which a Riemannian submersion $T_r\xrightarrow{\pi_r}f(N)$ exists such that $\pi_r|_{f(N)}$ is the identity.\footnote{
Notice that, in general, $T_r\xrightarrow{F_r} f(N)$ is not necessarily a trivial bundle. 
}
Denoting by $T_r\xrightarrow{F_r}N$ the smooth map $F_r:=f^{-1}\circ\pi_r$, we see that $F_r\circ f=\textrm{Id}_N$. \\ 
Utilizing the co-area theorem~\cite{Nicolaescu-notes} in the case of the previous submersion, we obtain a bounded partial Fubini integration map $u_r:L^2(\Omega_{\mathbbm{C}}(T_r))\to L^2(\Omega_{\mathbbm{C}}(N))$ that generalizes the usual Hilbert projection in the case of product of manifolds. 
Consider the orientable Riemannian ``open'' submanifold $(T_r,g_M|_{T_r})$ and 
$(\mathcal{A}_{T_r},\mathcal{H}_{T_r},\mathcal{D}_{T_r}):=(C_b(T_r),L^2(\Omega_{\mathbbm{C}}(T_r)),\mathcal{D}_M|_{T_r})$, 
its ``non-compact'' Riemannian spectral triple, obtained by restriction to $T_r$, where $C_b(T_r)$ denotes the unital C*-algebra of bounded continuous functions on $T_r$. 

\medskip 

Using the fact that the tangent map $(F_r)_*$ of the retraction $F_r$ is a bundle co-isometry, we have a unital $*$-homomorphism $\CCl_{\mathcal{D}_N}(N)\xrightarrow{\phi_r}\CCl_{\mathcal{D}_{T_r}}(T_r)$, that induces on $\CCl_{\mathcal{D}_{T_r}}(T_r)$ the structure of a right-$\CCl_{\mathcal{D}_N}(N)$ module; 
furthermore the morphism $\phi_r$ has  
a left-inverse $\CCl_{\mathcal{D}_N}(N)\xleftarrow{\psi_r}\CCl_{\mathcal{D}_{T_r}}(T_r)$ that is a ``conditional expectation'' of $\CCl_{\mathcal{D}_{T_r}}(M)$ onto $\CCl_{\mathcal{D}_N}(N)$. 
As a consequence of Stinespring's theorem~\cite[theorem II.7.5.2]{Black}, there is a GNS-representation for conditional expectations that will provide us with a right-$\CCl_{\mathcal{D}_N}(N)$ C*-module $L^2(\psi_r)$ inducing the following tensor product decomposition for Hilbert \hbox{C*-modules}: 
$\mathcal{H}_{T_r}\simeq L^2(\psi_r)\otimes_{\CCl_{\mathcal{D}_N}(N)} \mathcal{H}_N$ that generalizes the usual tensor product of Hilbert spaces in the product case. 
The Hodge de Rham Dirac operators are expected to ``factorize'' only when the embedding is totally geodesic, still there is on $T_r$ a ``cutting projection'' situation that we will examine in further work. 

\medskip 

We warn in advance the reader that the pair $(f^*,u_r)$ will not be a naive morphism of spectral triples as we will define in the subsequent section~\ref{Mor} (since $u_r(x\cdot \xi)\neq f^*(x)\cdot u_r(\xi)$, for $x\in \mathcal{A}_M$ and $\xi\in \mathcal{H}_M$) and hence also the ``cutting projection'' will not provide an isometric subspectral triple according to point d.~in our definition~\ref{def: iso}. 
Again, one can overcome this problem (and eliminate as well the arbitrariness due to the choice of the parameter $r$) passing to the ``inductive limit for $r\to 0$'' of the spectral triples of the family of tubular neighborhoods $T_r$ (using the definition in~\cite{Flo-Gho}); the price to pay will be the need of a more sophisticated notion of (co-span) morphism of spectral triples (consisting of two combined processes of ``transport'' and ``correspondence'', along the lines already suggested in~\cite{F}) that we will necessarily examine elsewhere. 
\end{Rem}

We ``formalize'' the previous discussion recalling these definitions. 
\begin{Def}
A \textbf{submersion} is a smooth map $F:M\to N$, between manifolds, such that its tangent map $F_*: TM\to TN$ is fibrewise surjective. 
\\ 
A \textbf{Riemannian submersion} is a submersion $F:M\to N$, between Riemannian manifolds, such that its tangent map $F_*:TM\to TN$ is a fibrewise isometry when restricted to the subbundle $\ker(F_*)^\perp\subset TM$. 
\end{Def}

\begin{Thm} \label{th: Tr}
Let $f:N\to M$ be a Riemannian immersion that admits a \textbf{retraction} $F:M\to N$ (a smooth map such that $F\circ f(x)=x$, for all $x\in N$) that is a Riemannian submersion  
Then, with the same assumptions and notation as in theorem~\ref{Lhimma}, we have that:
$C^\infty(N)$ is a submanifold algebra of $C^\infty(M)$ and 
$ \|\, [ \mathcal{D}_N,f^*(h)]\, \|_{\mathcal{B}(\mathcal{H}_N)}    =    \|\, [  \mathcal{D}_M,F^*(f^*(h))]\, \|_{ \mathcal{B}(\mathcal{H}_M)}$, for all $h\in C^\infty(M)$.
\end{Thm}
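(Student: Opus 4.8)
The plan is to reduce both commutator norms to suprema of gradient lengths, exactly as in the proof of Theorem~\ref{Lhimma}, and then to consume the defining isometry property of the Riemannian submersion $F$ in order to match those suprema.

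First I would dispose of the submanifold algebra claim. The existence of the retraction forces $f$ to be injective: if $f(p_1)=f(p_2)$ then $p_1=F(f(p_1))=F(f(p_2))=p_2$. An injective immersion out of a compact manifold is automatically a proper embedding, so by \cite[theorem~11]{SIGMA16} the surjection $\varphi=f^*:C^\infty(M)\to C^\infty(N)$ exhibits $C^\infty(N)$ as a submanifold algebra of $C^\infty(M)$, precisely as in Theorem~\ref{Lhimma}. I would also record that the identity $F\circ f=\mathrm{Id}_N$ makes $F$ surjective, a fact needed at the very end.

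Next I would recall, as in the Hodge de Rham setup, that $[\mathcal{D}_M,k]=c_M(d_Mk)$ is Clifford multiplication by $d_Mk$, and that the Clifford relation $c_M(v)^2=\|v\|^2$ forces the fibrewise operator norm of $c_M(v)$ to equal $\|v\|$; taking the supremum over the compact manifold then gives, using the metric identification of $d_Mk$ with $\text{grad}_M(k)$,
\begin{equation*}
\|[\mathcal{D}_M,k]\|_{\mathcal{B}(\mathcal{H}_M)}=\sup_{q\in M}\|\text{grad}_M(k)(q)\|,
\end{equation*}
and analogously on $N$. Writing $g:=f^*(h)\in C^\infty(N)$, so that $F^*(f^*(h))=F^*g$, the target identity becomes the purely Riemannian statement $\sup_{p\in N}\|\text{grad}_N(g)(p)\|=\sup_{q\in M}\|\text{grad}_M(F^*g)(q)\|$. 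The heart of the argument is then a pointwise computation of $\text{grad}_M(F^*g)$. At $q\in M$ I would use the orthogonal splitting $T_qM=\ker(F_*)_q\oplus\ker(F_*)_q^\perp$; for any $v\in T_qM$ one has $d(F^*g)_q(v)=dg_{F(q)}(F_*v)$, and $F_*$ annihilates the vertical component of $v$, so $\text{grad}_M(F^*g)(q)$ is orthogonal to $\ker(F_*)_q$, i.e.\ horizontal. Since $F$ is a Riemannian submersion, $F_*$ restricts to a linear isometry $\ker(F_*)_q^\perp\to T_{F(q)}N$, and a short check identifies $\text{grad}_M(F^*g)(q)$ as the horizontal lift of $\text{grad}_N(g)(F(q))$; the isometry property then yields the pointwise equality $\|\text{grad}_M(F^*g)(q)\|=\|\text{grad}_N(g)(F(q))\|$.

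Finally I would take suprema over $q\in M$ and invoke the surjectivity of $F$, so that $F(q)$ exhausts $N$ and $\sup_{q\in M}\|\text{grad}_N(g)(F(q))\|=\sup_{p\in N}\|\text{grad}_N(g)(p)\|$; combined with the previous reduction this gives the asserted equality of commutator norms. The main obstacle is exactly the pointwise gradient computation: one must verify carefully that pull-back along a Riemannian submersion carries gradients to their horizontal lifts, which is where the full strength of the hypothesis ``isometry on $\ker(F_*)^\perp$'' (rather than mere submersion) is used; everything else is bookkeeping together with the surjectivity observation.
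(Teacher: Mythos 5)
Your proposal is correct and follows essentially the same route as the paper's proof: both arguments show that $\text{grad}_M(F^*k)$ is horizontal (orthogonal to $\ker F_*$) and is the horizontal lift of $\text{grad}_N k$, then use the fibrewise isometry of $F_*$ on $(\ker F_*)^\perp$ to equate the gradient norms and hence the commutator norms. Your added care about reducing operator norms to suprema of pointwise gradient lengths and invoking the surjectivity of $F$ when passing to the supremum is a welcome explicit touch, but it does not change the substance of the argument.
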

\begin{proof}
The submanifold algebra condition is true, as usual, considering 
$\varphi:=f^*: C^\infty(M)\to C^\infty(N)$ the pull-back by the embedding $f$.

\medskip 

Since $F:M\to N$ is a Riemannian retraction of the Riemannian embedding $f:N\to M$, the tangent map $f_*: TN \to TM$ is a fibrewise isometry; the tangent map $F_*: TM\to TN$ is fibrewise a co-isometry, $F_*\circ f_*=I_{TN}$, we have an orthogonal decomposition of the tangent bundle $TM\simeq (\ker F_*) \oplus (\ker F_*)^\perp$ as a direct sum of ``vertical'' and ``horizontal'' subbundles, furthermore any vector field $X\in \Gamma(TN)$ admits a unique horizontal lift $\overline{X}\in\Gamma((\ker F_*)^\perp)$ such that $F_*(\overline{X})=X$.

\medskip 

For every $k\in C^\infty(N)$ 
we have that $\text{grad}_M (F^*k)\in (\ker F_*)^\perp$ since, for every vector field $Z\in \ker F_*$: 
\begin{align*}
g_M(\text{grad}_M (F^*k)), Z)
=
(d_M(F^*k))(Z)
= 
(F^*(d_Nk))(Z)
=d_Nk(F_*Z)=0.
\end{align*}

\medskip 

Furthermore, we have that $F_*(\text{grad}_M (F^*k))=\text{grad}_N k$. 
For every vector field $X\in \Gamma(TN)$, consider the unique lifted horizontal vector field $\overline{X}\in \Gamma(TM)$: 
\begin{align*}
g_N(F_*(\text{grad}_M &(F^*k)),X)
=g_M(\text{grad}_M (F^*k)), \overline{X})
=d_M(F^*k)(\overline{X})
\\
&=(F^*d_Nk)(\overline{X})
=d_Nk(F_*(\overline{X}))
=d_Nk(X)
=g_N(\text{grad}_Nk,X).
\end{align*}

\medskip 

Since $F_*: (\ker F_*)^\perp \to TN$ is a surjective isometry, we obtain that: 
\begin{equation*}
\|\text{grad}_M (F^*k)\|=\| \text{grad}_N k\|,
\end{equation*}
hence $\| [\mathcal{D}_M,F^*k] \|=\| [\mathcal{D}_N,k] \|$ and so, taking $k:=f^*(h)$ for $h\in C^\infty(M)$, we obtain 
\begin{equation*}
\|[\mathcal{D}_N,f^*(h)]\|_{\mathcal{B}(\mathcal{H}_N)}=\|[ \mathcal{D}_M , F^*(f^*(h))]\|_{\mathcal{B}(\mathcal{H}_M)}.
\end{equation*}
\end{proof}

Products of compact orientable Riemannian manifolds are special cases covered by the previous theorem~\ref{th: Tr}: in this case we can take a global tubular neighborhood $T_{\hat{r}}=N\times M$, with $\hat{r}:=\text{diam}(M)$, $f$ as the inclusion of $N$ in the product $M\times N$ and $F$ as the projection onto the second component. 

\medskip 

The following example of embedding is paradigmatic of theorem~\ref{th: Tr} and it is more general than the usual Riemanian products. 

\begin{Ex}
\label{embT2}
Consider $S^1$ as a submanifold of $\mathbbm{T}^2$ with the major radius $c$, and the radius of the cross-section $1$. The map $f:S^1 \rightarrow \mathbbm{T}^2$ given by $f(\theta):=(\theta,0)$ is an embedding, with a retraction $F(\theta,\phi):=\theta$. The metric tensors on $\mathbbm{T}^2$ and $S^1$ are
\begin{equation*}
g_{\mathbbm{T}^2}=~\left(\begin{array}{cc}
     1 & 0  \\
     0 & (c-\sin{\theta})^2
\end{array}\right)\quad {\textit and} \qquad
g_{S^1}=(1), 
\end{equation*}
respectively. The circle is clearly an isometrically embedded submanifold of $\mathbbm{T}^2$. Let $h(\theta,\phi)\in C^\infty(\mathbbm{T}^2)$ and $\xi\in \mathcal{H}_{\mathbbm{T}^2}=L^2({\Omega}_{\mathbbm{C}}(\mathbbm{T}^2))$, then compute 
\begin{align*}
   d_{\mathbbm{T}^2}F^*(f^*(h))=&~d_{\mathbbm{T}^2} h(\theta,0) 
   =
   \frac{\partial h}{\partial \theta}d\theta.
\end{align*}
The norm of commutators with the Hodge de Rham Dirac operators is given by: 
\begin{align*}
\|\ [\mathcal{D}_{\mathbbm{T^2}},F^*(h\circ f)]\ \|_{\mathcal{B}(H_{\mathbbm{T}^2})}=&~g^{\theta\theta}\left(\frac{\partial h}{\partial\theta}\right)^2 
=\|\ [\mathcal{D}_{S^1},{h}\circ f]\ \|_{\mathcal{B}(\mathcal{H}_{S^1})}. 
\end{align*}

\medskip 

Notice again that, as described in remark~\ref{rem: Rsub}, we have a bounded Fubini partial integration expectation $u_{\hat{r}}:L^2(\Omega_{\mathbbm{C}}(\mathbbm{T}^2))\to L^2(\Omega_{\mathbbm{C}}(S^1))$,  but $u_{\hat{r}}$ does not coincide with the usual pull-back $f^*$ on continuous forms.  

\medskip 

If we choose $u:=f^*$ the pull-back on forms, then $u$ is a densely defined unbounded operator from continuous forms on $T^2$ to $L^2(\Omega_{\mathbbm{C}}(S^1))$. To see this, let $h\in C(S^1)$ 
and consider the family of continuous functions on $S^1$ given by: 
\begin{equation*}
  \displaystyle  
  g_n(\varphi):=\left\{\begin{array}{cc}
       \sqrt{n\cos\left(\frac{n\pi}{2\varphi_0}\varphi\right)}  &, 
       \quad \varphi< \varphi_0/n  
       \\
        0 &, 
        \quad \varphi\geq \varphi_0/n
    \end{array}\right.
\end{equation*} 
where $\varphi_0$ is a small angle. One can check that
\begin{align*}
    \|hg_n\|_{L^2(T^2)}=&~\left(\int nh^2(\theta)\cos\left(\frac{n\pi}{2\varphi_0}\varphi\right)d\theta d\varphi\right)^{1/2} \\
    =&~2\sqrt{\frac{\varphi_0}{\pi}}\|h\|_{L^2(S^1)} \\
    <&~\infty.
\end{align*}
However, the pull-back of the function is unbounded i.e.
\begin{align*}
    \|f^*(hg_n)\|_{L^2(S^1)}=
    \int nh(\theta)^2 d\theta 
    =
    n\|h\|_{L^2(S^1)},
\end{align*}
which diverges as $n\rightarrow \infty$.
\end{Ex}

Finally, let us investigate the relation between the Riemannian isometric map $f:N\to M$ and the geodesic distance on the submanifold. 
The metric in the canonical spectral triple, which coincides with the geodesic distance computed from the metric tensor on the manifold $M$, is given by Connes' distance formula
\begin{align}
d_{\mathcal{D}_M}(y,y')=\sup \left\{|h (y)-h(y')| \ \big{|} \ h \in C^\infty(M), \ \|[\mathcal{D}_M,h] \|_{\mathcal{B}(\mathcal{H}_M)} \leq 1\right\}, 
\end{align}
for any $y,y' \in M$. 
Notice first the trivial fact that condition~ii) in theorem~\ref{Lhimma} implies the equality of Connes' distances (and hence geodesic distances) 
\begin{equation}
d_{\mathcal{D}_M}(f(x),f(x'))=d_{\mathcal{D}_N}(x,x'), \quad \forall x,x'\in N, 
\label{isometric}
\end{equation}
as it is expected in the special case of open Riemannian embeddings, that are already known to be global isometries.

\medskip 

From theorem~\ref{th: Tr} we obtain instead this ``variant'' of Connes' isometry
\begin{gather*}
\hat{d}_{\mathcal{D}_M}(f(x),f(x'))=d_{\mathcal{D}_N}(x,x'), \quad \forall x,x'\in N, \quad \text{where}
\\
\hat{d}_{\mathcal{D}_M}(y,y'):=\sup \{|h(y)-h(y')| \ | \  h\in F^*(C^\infty(N)), \  \|[\mathcal{D}_M, h]\|_{\mathcal{B}(\mathcal{H}_M)}\leq 1 \}
\end{gather*}
is a ``modified Connes' distance'' over the subalgebra $F^*(C^\infty(N))\subset C^\infty(M)$; 
this is in line with our definition of isometric sub-triples in section~\ref{Mor} (although the ``cutting projection'' that is used is not of the form $P:=u^*u$ with $u=f^*$ the usual pull-back morphism). 

\medskip 

Although we know the formula of distance on certain spectral triples, computing it is not very practical. The merit of employing such formula is that the norm equality in the theorem \ref{Lhimma} and \ref{th: Tr} can be extended to the notion of isometry on some noncommutative spectral triples.

\section{Spectral Subtriples}
\label{Mor}

An exact sequence of C$^*$-algebras induces morphisms between the state spaces of the algebras: 

\begin{center}
\begin{tikzcd}
 0\arrow{r}&\mathcal{I}\arrow[d, my mapsto]\arrow{r}{\iota} &\mathcal{A} \arrow{r}{\varphi} \arrow[d, my mapsto] & \mathcal{B} \arrow[d, my mapsto]\arrow{r} & 0 \\
& \mathscr{S}(\mathcal{I}) & \arrow{l}{\iota^*} \mathscr{S}( \mathcal{A})\arrow{l}  & \arrow{l}{\varphi^*} \mathscr{S}(\mathcal{B})
\end{tikzcd}
\end{center}
The induced map $\varphi^*$ is defined by
\begin{align*}
(\varphi^*\rho)(a)=&~\rho(\varphi(a)), \quad a\in \mathcal{A}, \ \rho\in \mathscr{S}(\mathcal{B}).
\end{align*}

Since $\varphi$ is surjective, by the GNS construction, one can show that the representation constructed form $\rho$ and $\varphi^*\rho$ are unitary equivalent \cite{Emch}. This makes $\pi_{\varphi^*\rho}$ irreducible, whenever $\rho$ is a pure state.
The irreducibility implies that the induced map $\varphi^*$ preserves the purity of states. 

\medskip 

Whenever the C*-algebras $\mathcal{A}, \mathcal{B}$ belong to the spectral triples $(\mathcal{A},\mathcal{H}_\mathcal{A}, \mathcal{D}_\mathcal{A})$ and $(\mathcal{B},\mathcal{H}_\mathcal{B}, \mathcal{D}_\mathcal{B})$, 
using Connes' metric \eqref{eq: Connes-d} one may naively define the isometric condition as a noncommutative version of \eqref{isometric}: 
\begin{equation*}
d_{\mathcal{D}_\mathcal{A}}(\varphi^*\rho,\varphi^*\rho')=d_{\mathcal{D}_\mathcal{B}}(\rho,\rho'), \quad \forall \rho,\rho'\in \mathscr{P}(\mathcal{B}).
\end{equation*}
However, since the Connes' distance gives the geodesic distance between any pair of pure states, the computation of Connes' distance, using the Dirac operator of the ambient space, without any constraint, would yield for example the distance of the segment in Figure \ref{Riemdist} instead of the distance along the arc.

\begin{figure}
    \centering
    \begin{tikzpicture}[scale=0.65]
  \draw (0,0) circle [radius=3 cm];
  \draw (0,3) node[above=0.1 cm]{$x$};
  \draw (3,0) node[right=0.1 cm]{$y$};
  \draw[teal] (0,3)node[shape=circle,fill=black, scale=0.3]{p} -- (3,0) node[shape=circle,fill=black, scale=0.3]{p}; 
  \draw [red] (3,0) arc[radius=3 cm,start angle=0,end angle=90];
\end{tikzpicture}
    \caption{Connes' distance on $S^1$ embedded in $\mathbbm{R}^2$ versus geodesic distance}
    \label{Riemdist}
\end{figure}
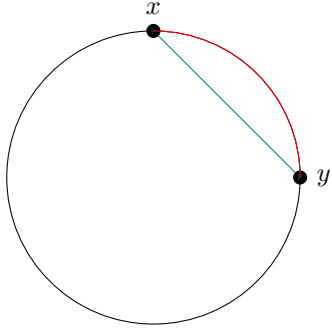

\medskip 

To attempt a first definition of isometric subtriple, we use the notions of morphism between spectral triples in \cite{Bertozzini_2012, Flo-Gho} with some modification.

\begin{Def} \label{def: iso}
\label{subtriple}
Suppose that $(\mathcal{A}_1,\mathcal{H}_1,\mathcal{D}_1)$ and $(\mathcal{A}_2,\mathcal{H}_2,\mathcal{D}_2)$ are spectral triples. 
A {\bf smooth naive morphism} $(\mathcal{A}_1,\mathcal{H}_1,\mathcal{D}_1)\xleftarrow{(\varphi,u)}(\mathcal{A}_2,\mathcal{H}_2,\mathcal{D}_2)$ consists of 
\begin{itemize}
\item[1.] 
a unital $*$-homomorphism $\varphi: \mathcal{A}_2 \rightarrow \mathcal{A}_1$ 
\item[2.] 
a (possibly unbounded) \footnote{
In the unbounded case, we require $\Dom(u)$ dense in $\mathcal{H}_2$, invariant for all $T\in\Omega_{\mathcal{D}_2}(\mathcal{A}_2)$ and 
the existence of a common invariant core for $u$ and $\mathcal{D}_2$, inside $\Dom(u)\cap\Dom({\mathcal{D}_2})$. 
} 
linear map $u:\mathcal{H}_2 \rightarrow \mathcal{H}_1$ 
\end{itemize}
such that 
\begin{itemize}
\item[3.] 
$u \pi_2(x)=\pi_1(\varphi(x)) u$, for all $x\in\mathcal{A}_2$, 
\item[4.] 
$\varphi(\mathcal{A}_2^\infty)\subset \mathcal{A}_1^\infty, \quad u(\mathcal{H}_2^\infty)\subset \mathcal{H}_1^\infty$.  
\end{itemize}
We say that a smooth morphism is an {\bf embedding of spectral triples} if: 
\begin{itemize}

\item[5.] 
$\varphi$ is surjective and 
$\mathcal{A}_1^\infty$ is a submanifold algebra of $\mathcal{A}_2^\infty$
\end{itemize}
Given an embedding of spectral triples 
$(\mathcal{A}_1,\mathcal{H}_1,\mathcal{D}_1)\xleftarrow{(\varphi,u)}(\mathcal{A}_2,\mathcal{H}_2,\mathcal{D}_2)$, 
we say that $(\mathcal{A}_1,\mathcal{H}_1,\mathcal{D}_1)$ is {\bf spectral subtriple} of $(\mathcal{A}_2,\mathcal{H}_2,\mathcal{D}_2)$ if one of the following further conditions are satisfied:  
\begin{itemize}
\item[a.] 
$(\varphi,u)$ is a {\bf Connes' isometry} (see e.g.~\cite[definition 2.3]{bassi} and also \cite{BCL11}): 
\begin{equation*}
d_{\mathcal{D}_2}(\varphi^*\rho,\varphi^*\rho')=d_{\mathcal{D}_1}(\rho,\rho'), \quad \forall \rho,\rho'\in\mathscr{P}(\mathcal{A}_1),
\end{equation*}
in which case $(\mathcal{A}_1,\mathcal{H}_1,\mathcal{D}_1)$ is called a {\bf Connes' isometric subtriple}; 
\item[b.] 
$(\varphi,u)$ is a {\bf Riemannian isometry} (see \cite{MarioPaschke_2004} and also \cite{Bertozzini_2012}):  
\begin{equation*}
u[\mathcal{D}_2,\pi_2(x)] =[\mathcal{D}_1,\pi_1(\varphi(x))]u, \quad \forall x\in \mathcal{A}_2,  
\end{equation*}
in which case $(\mathcal{A}_1,\mathcal{H}_1,\mathcal{D}_1)$ is called a {\bf Riemannian subtriple}; 
\item[c.]
$(\varphi,u)$ is a {\bf totally geodesic isometry} (see \cite{Paolo2005}): 
\begin{equation*}
u\mathcal{D}_2=\mathcal{D}_1u,
\end{equation*}
in which case $(\mathcal{A}_1,\mathcal{H}_1,\mathcal{D}_1)$ is a {\bf totally geodesic subtriple}, 

\item[d.]
$u$ is a partial isometry onto $\mathcal{H}_1$, with projection $P=u^*u\in \mathcal{B}(\mathcal{H}_2)$, and
\begin{equation*}
d_{P\mathcal{D}_2P}(\varphi^*\rho,\varphi^*\rho')= d_{\mathcal{D}_1}(\rho,\rho'), \quad \forall \rho,\rho'\in\mathscr{P}(\mathcal{A}_1), 
\end{equation*}
in wihich case $(\mathcal{A}_1,\mathcal{H}_1,\mathcal{D}_1)$ will be called an {\bf isometric subtriple}.
\end{itemize}
\end{Def}

Some important comments are necessary in order to explain the (temporary) choice and the limitations involved in the previous definition. Although some of the following arguments are very concise and not fully developed, we hope that they might provide a sufficient justification for the present choice of naive morphisms of spectral triples (with a possibly unbounded map $u$ in order to cover non trivial embeddings of Riemannian manifolds). 
\begin{itemize}

\item 
The introduction of the smoothness and submanifold conditions in points 4.~and 5.~above, serves to assure the ``smoothness and regularity'' of the morphisms: in the commutative case, we are not aware of any result generalizing Myers-Steenrod theorem and hence a Connes' metric isometry in point a.~above, might not necessarily imply smoothness in the case of isometric embeddings where the dimension of $N$ is strictly less than the dimension of $M$. 
\item 
As noted in \cite{F}, any embedding $N\xrightarrow{f} M$ of Riemannian manifolds, determines a canonical isomorphism of pre-Hilbert C*-modules over $C(N)$ 
\begin{equation*}
\Omega(f(N))\otimes_{C(N)}\Omega(f(N)^\perp)\simeq C(N)\otimes_{C(M)}\Omega(M). 
\end{equation*}
that is obtained by second quantisation of the orthogonal decomposition in \eqref{Decom}, where $\Omega(f(N))$ denotes the $C(N)$-module of continuous sections of the complexified of the Grassmann bundle of the manifold $f(N)$; $\Omega(f(N)^\perp)$ is the $C(N)$-module of continuous sections of the complexified of the Grassmann bundle of the normal bundle $\mathcal{N}_f$ and we indicate with $C(N)\otimes_{C(M)}\Omega(M)$ the \hbox{$C(N)$-module} of continuous sections of the restriction to $f(N)$ of the complexified Grassmann bundle of $M$. Upon completion under the natural $C(N)$-valued inner products, the above isomorphism extends to Hilbert C*-modules over the C*-algebra $C(N)$. 
There is natural bounded ``restriction'' morphism $\Omega(M)\to C(N)\otimes_{C(M)}\Omega(M)$ via $\sigma\mapsto 1_{C(N)}\otimes \sigma$;\footnote{
This is a morphism of modules over different rings $C(M)$ and $C(N)$ respectively with $\varphi$ twisting the action. 
} 
and there is a natural bounded ``projection'' morphism $\Omega(f(N))\otimes_{C(N)}\Omega(f(N)^\perp)\to \Omega(f(N))\simeq \Omega(N)$ on the ortocomplemented first factor $\Omega(f(N))\otimes_{C(N)}1_{\Omega(f(N)^\perp)}$.  
Making use of the Hodge de Rham spectral module triples introduced in \cite{F}, we get a well defined pull-back morphism $(\varphi,u)$ with bounded map $\Omega(M)\xrightarrow{u}\Omega(N)$. 

\item 

Unfortunately the previous nice bounded description at the level of Hilbert C$^*$-modules does not immediatley pass to the Hilbert spaces $L^2(\Omega_{\mathbbm{C}}(M))$ and $L^2(\Omega_{\mathbbm{C}}(N))$ induced by Riemann integration with respect to the respective Riemannian volume measures: the pull-back of forms is not well-defined as a bounded map between $L^2(\Omega_{\mathbbm{C}}(M))\to L^2(\Omega_{\mathbbm{C}}(N))$, as can be seen from the explicit example~\ref{embT2}.

It is anyway possible to obtain an unbounded pull-back map at the level of Hilbert spaces $L^2(\Omega_{\mathbbm{C}}(M))\xrightarrow{u} L^2(\Omega_{\mathbbm{C}}(N))$: using the compactness of the submanifold $f(N)\subset M$, we can find a tubular normal neighborhood $T_r$ of $f(N)$ with sufficiently small but finite radius $r>0$. 
Using the normal geodesic flow induced by the exponential map of the tubular neighborhood, we obtain a description of the tubular neighborhood $T_r$ as a foliation over $f(N)$ with fibres of dimension $\dim(M)-\dim(N)$. 

Utilising the co-area formula, generalizing Fubini-Tonelli theorem to foliations~\cite{Nicolaescu-notes}, we obtain a well-defined (fibre-normalised) partial integration bounded map $L^2(\Omega_{\mathbbm{C}}(T_r))\to L^2(\Omega_{\mathbbm{C}}(N))$ that composed with the bounded restriction map $L^2(\Omega_{\mathbbm{C}}(M))\to L^2(\Omega_{\mathbbm{C}}(T_r))$ provides a bounded expectation map $u_r: L^2(\Omega_{\mathbbm{C}}(M))\to L^2(\Omega_{\mathbbm{C}}(N))$. 
One can rigorously define our unbounded ``restriction'' operator $u: L^2(\Omega_{\mathbbm{C}}(M))\to L^2(\Omega_{\mathbbm{C}}(N))$ as $\lim_{r\to 0} u_r$, noting that on the dense subspace of ``continuous'' sections it is well-defined and, on continuous sections reproduces the usual restriction described before at the C*-module level. 

\item 

In the context here described, where $\varphi$ is thought to be a noncommutative generalization of the usual commutative ``restriction'' homomorphism between continuous functions and $u$ is supposed to generalize the usual ``pull-back'' of continuous differential forms (that provide only a dense subspace inside the Hilbert space of square integrable sections of the Grassmann bundles), the imposition of the boundedness of $u$ and its partial-isometric nature, is an extremely restrictive requirement that essentially limits the discussion (in commutative Riemannian situation) to the embedding of some sub-family of  connected components into a disconnected manifold.  
Hence the definition in point d.~above, has a very limited scope neverless, in the almost-commutative setting, a few interesting cases are described in the examples. 
\item 
As already mentioned in some detail in remark~\ref{rem: Rsub}, the partial isometry property of $u$ will be available also between Hilbert spaces of square integrable sections of Grassmann bundles, whenever a different type of morphism, based on conditional expectations (instead of homomorphism) between algebras, is used; the typical case would be the canonical expectations 
\begin{equation*}
L^2(\Omega_{\mathbbm{C}}(N_1))\xleftarrow{u_1} L^2(\Omega_{\mathbbm{C}}(N_1\times N_2)) \xrightarrow{u_2}L^2(\Omega_{\mathbbm{C}}(N_2))
\end{equation*}
available, via normalised partial integration, in the product of compact manifolds $N_1\times N_2$ and similarly also in our previously defined bounded operator $u_r: L^2(\Omega_{\mathbbm{C}}(T_r))\to L^2(\Omega_{\mathbbm{C}}(N))$, for $r>0$. 

We are not immediately entering into this topic here, leaving further detailed discussion of these points to subsequent work; we just mention that a more satisfactory description of morphisms (at least for commutative Riemannian spectral triples) will be achieved using, as already hinted in~\cite{F}, \textit{co-spans of naive morphism} of spectral triples: 
\begin{equation*}
(\mathcal{A}_N,\mathcal{H}_N,\mathcal{D}_N) \xrightarrow{} 
(C_b(T_r),L^2(\Omega_{\mathbbm{C}}(T_r)),\mathcal{D}_M|_{T_r}) \xleftarrow{}
(\mathcal{A}_M,\mathcal{H}_M,\mathcal{D}_M), 
\end{equation*}
eventually passing to the ``inductive limit'' of triples, as in~\cite{Flo-Gho}, for $r\to 0$. 
\end{itemize}

\bigskip

We collect here some immediate consequences and remarks related to the previous definitions. 
\begin{Prop}
Totally geodesic isometries are Riemannian isometries.

\medskip 

Whenever $uu^*=\text{Id}_{\mathcal{H}_1}$ (and hence $u$ is a partial isometry restricting to a unitary between $\mathcal{H}_1$ and $u^*u(\mathcal{H}_1)\subset\mathcal{H}_2$) a Riemannian isometry specially implements a unital $*$-homomorphism extending $\varphi$ to the complex Clifford algebras: 
\begin{equation*}
\CCl_{\mathcal{D}_1}(\mathcal{A}_1) \xleftarrow{u - u^* }\CCl_{\mathcal{D}_2}(\mathcal{A}_2), 
\quad \quad 
uTu^*\mapsfrom T.
\end{equation*}
In the reverse direction, we have an inclusion unital $*$-homomorphism: 
\begin{equation*}
\CCl_{\mathcal{D}_1}(\mathcal{A}_1) \xrightarrow{u^*- u} \CCl_{\mathcal{D}_2}(\mathcal{A}_2), 
\quad \quad 
S \mapsto u^*Su,  
\end{equation*}
isomorphically mapping $\CCl_{\mathcal{D}_1}(\mathcal{A}_1)$ onto the subalgebra 
\begin{equation*}
P(\CCl_{\mathcal{D}_2}(\mathcal{A}_2))P\subset \CCl_{\mathcal{D}_2}(\mathcal{A}_2), 
\end{equation*} 
where $P:=u^*u$ is the Hilbert projection onto $u^*u(\mathcal{H}_2)$. 
The map $T\mapsto PTP$ is a conditional expectation of $\CCl_{\mathcal{D}_2}(\mathcal{A}_2)$ onto $u\CCl_{\mathcal{D}_1}(\mathcal{A}_1)u^*=P\CCl_{\mathcal{D}_2}(\mathcal{A}_2)P$. 
\end{Prop}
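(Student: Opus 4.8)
The plan is to derive everything from the two intertwining identities of definition~\ref{def: iso} and their adjoints, by transporting $u$ and $u^*$ through Clifford generators. I would first dispatch the totally geodesic implication, which is purely algebraic: from $u\mathcal{D}_2=\mathcal{D}_1u$ and condition~3, on $\Dom(\mathcal{D}_2)$ one computes $u[\mathcal{D}_2,\pi_2(x)]=\mathcal{D}_1u\pi_2(x)-\pi_1(\varphi(x))u\mathcal{D}_2=\mathcal{D}_1\pi_1(\varphi(x))u-\pi_1(\varphi(x))\mathcal{D}_1u=[\mathcal{D}_1,\pi_1(\varphi(x))]u$, which extends to all of $\mathcal{H}_2$ since the commutators are bounded; this is exactly condition~b. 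Taking adjoints of conditions~3 and~b, and using $\mathcal{D}_i^*=\mathcal{D}_i$, $\pi_i(x)^*=\pi_i(x^*)$ and $[\mathcal{D},\pi(x)]^*=-[\mathcal{D},\pi(x^*)]$, I would record the dual intertwiners $\pi_2(x)u^*=u^*\pi_1(\varphi(x))$ and $[\mathcal{D}_2,\pi_2(x)]u^*=u^*[\mathcal{D}_1,\pi_1(\varphi(x))]$, valid for all $x$. These four relations are the engine of the whole argument.

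For the forward map I would carry $u$ across a generator $T=\pi_2(x_0)[\mathcal{D}_2,\pi_2(x_1)]\cdots[\mathcal{D}_2,\pi_2(x_n)]$ one factor at a time, using conditions~3 and~b, to get $uT=\theta(T)u$ where $\theta(T):=\pi_1(\varphi(x_0))[\mathcal{D}_1,\pi_1(\varphi(x_1))]\cdots[\mathcal{D}_1,\pi_1(\varphi(x_n))]\in\CCl_{\mathcal{D}_1}(\mathcal{A}_1)$. Because $uu^*=\mathrm{Id}_{\mathcal{H}_1}$, this yields $uTu^*=\theta(T)$, so the contractive linear map $T\mapsto uTu^*$ carries generators to generators and, by continuity, maps $\CCl_{\mathcal{D}_2}(\mathcal{A}_2)$ into $\CCl_{\mathcal{D}_1}(\mathcal{A}_1)$. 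Multiplicativity follows by cancelling $u$ in $\theta(ST)u=uST=\theta(S)\theta(T)u$; the $*$-property from $(uTu^*)^*=uT^*u^*$ with $T^*\in\CCl_{\mathcal{D}_2}(\mathcal{A}_2)$; unitality from $u\,\mathrm{Id}_{\mathcal{H}_2}\,u^*=uu^*=\mathrm{Id}_{\mathcal{H}_1}$; and the $n=0$ generators show it extends $\varphi$.

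The reverse map $\rho\colon S\mapsto u^*Su$ is simpler: $uu^*=\mathrm{Id}_{\mathcal{H}_1}$ gives $\rho(S)\rho(S')=u^*S(uu^*)S'u=u^*SS'u=\rho(SS')$, it is $*$-preserving, and it is injective because $u\rho(S)u^*=S$. Applying the dual intertwiners to a generator $S$ (after writing its entries as $\varphi$-images, which is where surjectivity of $\varphi$ in condition~5 enters) gives $\rho(S)=\tilde S\,P$ with $\tilde S\in\CCl_{\mathcal{D}_2}(\mathcal{A}_2)$ and $P:=u^*u$; together with $PTP=u^*\theta(T)u=\rho(\theta(T))$ this identifies the range $u^*\CCl_{\mathcal{D}_1}(\mathcal{A}_1)u=P\CCl_{\mathcal{D}_2}(\mathcal{A}_2)P$, on which $\rho$ is a $*$-isomorphism with unit $\rho(\mathrm{Id})=P$. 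Finally $E(T):=PTP$ is idempotent, positive (as $PR^*RP=(RP)^*(RP)$), of norm one, restricts to the identity on $P\CCl_{\mathcal{D}_2}(\mathcal{A}_2)P$ and is bimodular over it, hence a conditional expectation onto that corner.

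The main obstacle is not the intertwining algebra but ensuring the corner $P\CCl_{\mathcal{D}_2}(\mathcal{A}_2)P$ really lies inside $\CCl_{\mathcal{D}_2}(\mathcal{A}_2)$: since this corner contains $P=\rho(\mathrm{Id})$, the inclusion is equivalent to $P=u^*u\in\CCl_{\mathcal{D}_2}(\mathcal{A}_2)$, which the abstract Riemannian-isometry hypotheses do not by themselves force. I would handle this by working in the intended setting of remark~\ref{rem: Rsub}, where $P$ is the image $\pi_2(p)$ of a projection $p\in\mathcal{A}_2$ (a characteristic function of a union of connected components) and hence automatically lies in $\CCl_{\mathcal{D}_2}(\mathcal{A}_2)$, or by adding $P\in\CCl_{\mathcal{D}_2}(\mathcal{A}_2)$ as a standing hypothesis. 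A secondary technical point is the lifting of the generators of $\CCl_{\mathcal{D}_1}(\mathcal{A}_1)$ to generators of $\CCl_{\mathcal{D}_2}(\mathcal{A}_2)$ with entries in $\mathcal{A}_2^1$, which rests on surjectivity of $\varphi$ and the smoothness/submanifold conditions; the unboundedness of the $\mathcal{D}_i$ is harmless here, since $u$ is bounded and all commutators involved are bounded, so the identities need only be checked on a common core and extended by continuity.
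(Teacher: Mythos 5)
Your proposal follows essentially the same route as the paper's own (very terse) proof: the paper simply notes that $u\mathcal{D}_2=\mathcal{D}_1u$ and $u\pi_2(x)=\pi_1(\varphi(x))u$ yield $u[\mathcal{D}_2,\pi_2(x)]=[\mathcal{D}_1,\pi_1(\varphi(x))]u$, and states that the remaining claims are verified using $uu^*=\mathrm{Id}_{\mathcal{H}_1}$, $P^2=P=u^*u$ and $u[\mathcal{D}_2,\pi_2(x)]u^*=[\mathcal{D}_1,\pi_1(\varphi(x))]$; your generator-by-generator transport of $u$ and $u^*$ (together with the adjoint intertwiners) is exactly the written-out version of that argument, so on the parts the paper actually proves you are in agreement. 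Where you go beyond the paper is your closing observation, and it is a genuine one: since the corner $P\,\CCl_{\mathcal{D}_2}(\mathcal{A}_2)\,P$ contains $P=P\pi_2(1)P$, the asserted inclusion $P\,\CCl_{\mathcal{D}_2}(\mathcal{A}_2)\,P\subset\CCl_{\mathcal{D}_2}(\mathcal{A}_2)$ is equivalent to $P\in\CCl_{\mathcal{D}_2}(\mathcal{A}_2)$, and the hypotheses only force $P$ into the \emph{commutant} of $\CCl_{\mathcal{D}_2}(\mathcal{A}_2)$ (via Lemma~\ref{Pcomm} and the adjoint of condition~b), not into the algebra itself. A degenerate example makes the point: $\mathcal{A}_2=\mathbbm{C}$ acting as scalars on $\mathcal{H}_2=\mathbbm{C}^2$ with $\mathcal{D}_2=0$, $\mathcal{A}_1=\mathbbm{C}$ on $\mathcal{H}_1=\mathbbm{C}$, $\varphi=\mathrm{id}$ and $u$ the coordinate projection satisfies every hypothesis, yet $P=\mathrm{diag}(1,0)\notin\mathbbm{C}\,\mathrm{Id}=\CCl_{\mathcal{D}_2}(\mathcal{A}_2)$, so $T\mapsto PTP$ is not an expectation onto a subalgebra of the Clifford algebra. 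The paper's proof passes over this silently; your proposed remedies (require $P=\pi_2(p)$ for a projection $p\in\mathcal{A}_2$, as happens in the commutative and almost commutative examples the paper has in mind, or add $P\in\CCl_{\mathcal{D}_2}(\mathcal{A}_2)$ as a standing hypothesis) are exactly what is needed to make the last two claims of the proposition literally correct.
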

\begin{proof}
It is immediate that $u\mathcal{D}_2=\mathcal{D}_1u$ and $u\pi_2(x)=\pi_1(\varphi(x))u$, for $x\in\mathcal{A}_2$, imply 
\begin{equation*}
u[\mathcal{D}_2,\pi_2(x)]=[\mathcal{D}_1,\pi_1(\varphi(x))]{u}. 
\end{equation*}
The remaining statements, when $u$ is a partial isometry, are verified using $uu^*=\text{Id}_{\mathcal{H}_1}$, $P^2=P:=u^*u$ together with $u[\mathcal{D}_2,\pi_2(x)]u^*=[\mathcal{D}_1,\pi_1(\varphi(x))]$, for all $x\in\mathcal{A}_2$. 
\end{proof}

\begin{Rem}
Always under the condition $uu^*=\text{Id}_{\mathcal{H}_1}$, a totally geodesic isometry immediately implies $u\mathcal{D}_2u^*=\mathcal{D}_1$ and hence, ``compressing'' the spectral triple $(\mathcal{A}_2,\mathcal{H}_2,\mathcal{D}_2)$ by the orthogonal projection $P:=u^*u$, we get a spectral triple $(P \pi_2(\mathcal{A}_2) P, P(\mathcal{H}_2), P\mathcal{D}_2 P)$ that is isomorphic (unitarily equivalent) to the spectral triple $(\mathcal{A}_1,\mathcal{H}_1,\mathcal{D}_1)$ and $u^*:\mathcal{H}_1\to P(\mathcal{H}_2)$ is a unitary intertwiner between such isomorphic spectral triples. 
\end{Rem}

\begin{Rem}
It is not possible to immediately obtain implications between Connes' metric isometries and (totally geodesic) Riemannian isometries since the notion of Connes' metric isometry relies only on the existence of a unital $*$-homomorphism $\varphi$ at the level of algebras and (even assuming the existence of a lifting of such homomorphism to the complex Clifford algebras) this does not necessarily imply the existence of a spatial implementation of $\varphi$ (or its Clifford extension) between the Hilbert spaces of the spectral triples. 

\medskip 

Things will be different if we assume the existence of a Clifford algebra extension of the homomorphism $\varphi$ and we further introduce the requirement of existence of cyclic separating vector for von Neumann Clifford algebras as in \cite{LORD2012}. We will pursue this direction elsewhere. 
\end{Rem}
 
One can see that totally geodesic isometries are the strongest among these introduced notions: from the proposition above, a totally geodesic subtriple is also a Riemannian subtriple; furthermore, as it will be proved in the next section, a totally geodesic subtriple is (under suitable conditions) also an isometric sub-triple. The implications are depicted in the following diagram:  
\begin{equation*}
\begin{tikzcd}
\text{Totally geodesic} \arrow[dr, dashed] \arrow[r]  &  \text{Riemannian} \arrow[d, dashed]
\\
\text{Connes' Isometric} & \text{Isometric} 
\end{tikzcd}
\end{equation*}
The meaning of the dashed arrows will be clear in section \ref{Iso}.
For the moment we do not provide implications from metric isometries to (totally geodesic) Riemannian isometries, although (under certain conditions) we expect to have some results in this direction (see the last remark above). 

\begin{Ex}
{\bf Finite spectral triples}\\
For $N\geq 2$, an $N-$point space is given by a spectral triple 
\begin{equation*}
F_N=(\bigoplus^N_{i=1}\mathbbm{C},\mathbbm{C}^{N(N-1)}, \mathcal{D}_N) 
\end{equation*}
with the representation \cite{AlmostNCG}
\begin{align*}
\pi_{N}(a_1, \ldots, a_N)=&~\pi_{N-1}(a_1, \ldots, a_{N-1})\\
&\oplus\left(\begin{array}{cc} a_1 & 0 \\ 0 & a_N \end{array}\right)\oplus \cdots \oplus\left(\begin{array}{cc} a_{N-1} & 0 \\ 0 & a_N \end{array}\right)
\end{align*}
where $a=(a_1, \ldots, a_{N})\in \bigoplus^N_{i=1}\mathbbm{C}$, and the Dirac operator 
\begin{equation*}
\mathcal{D}_{N}=\mathcal{D}_{N-1}\oplus\bigoplus^{N-1}_{j=1}\mathcal{D}_{j,N}, 
\end{equation*} 
where
\begin{equation*}
    \mathcal{D}_{j,N}=\left(\begin{array}{cc} 0 & x_{j,N} \\ \bar{x}_{j, N} & 0 \end{array}\right),
\end{equation*}
and $x_{j,N}$ is a complex number. Note that
\begin{equation*}
\pi_2(a_1, a_2)=\left(\begin{array}{cc} a_1 & 0 \\ 0 & a_2 \end{array}\right)~~{\rm and}~~
\mathcal{D}_2=\left(\begin{array}{cc} 0 & x_{1,2} \\ \bar{x}_{1,2} & 0 \end{array}\right).
\end{equation*}
We define a map $\varphi$ to be 
\begin{equation*}
    \varphi(a_1,a_2, ..., a_N):=(a_1,a_2,..., a_{N-1}),
\end{equation*}
which is surjective. Derivation on $\mathbbm{C}^N$ (diagonal $n\times n$ complex matrices) is a linear map satisfying Leibniz rule, i.e. for any $B\in\mathbbm{C}^N$ a derivation $\delta_A$ is given by
\begin{equation*}
\delta_{A}B=[A,B],
\end{equation*}
which is well-defined only when $A\in \mathbbm{C}^N$. The set ${\rm Der}(\mathbbm{C}^N)$ contains only zero maps, and hence ${\rm Der}_\varphi(\mathbbm{C}^N)={\rm Der}(\mathbbm{C}^N)$. The induced trivial zero map $\varphi_*:{\rm Der}_\varphi(\mathbbm{C}^N) \rightarrow {\rm Der}(\mathbbm{C}^{N-1})$ is given by $\varphi_*\delta_AB=[\varphi(A),B]~,$
for $A\in \mathbf{D}_N(\mathbbm{C})$ and $B \in \mathbbm{C}^{N-1}$. Since $\varphi$ is surjective, the induced map taking derivations on $\mathbbm{C}^N$ to derivations on $\mathbbm{C}^{N-1}$ is also surjective. Therefore, $\mathbbm{C}^{N-1}$ is a submanifold algebra of $\mathbbm{C}^N$. 
Let us choose $u:\mathbbm{C}^{N(N-1)}\rightarrow \mathbbm{C}^{(N-1)(N-2)}$ to be a projection mapping $\psi=\xi\oplus\eta \mapsto \xi$, for $\xi \in \mathbbm{C}^{(N-1)(N-2)}$ and $\eta\in \mathbbm{C}^{2(N-1)}$, then for $a\in \mathbbm{C}^N$
\begin{align*}
u\pi_N(a)\psi=
\pi_{N-1}(a_1, \dots, a_N-1)\xi 
=
\pi_{N-1}(\varphi(a))u\psi,
\end{align*}
and one can show that $F_2$ is a Riemannian subtriple of $F_3$ i.e. 
\begin{align*}
    u[\mathcal{D}_{3},\pi_{3}(a_1,a_2, 
    a_3)] 
&= u\left[
    \begin{pmatrix}
    \,0 & x_{1,2}(a_2-a_1)\, \\ \,-\bar{x}_{1,2}(a_2-a_1) & 0\,~ 
    \end{pmatrix} \right. 
   \\ 
   & \quad \quad \left. \oplus
    \begin{pmatrix}
    \,0 & x_{1,3}(a_3-a_1)\, \\ \,-\bar{x}_{1,3}(a_3-a_1) & 0\,~ 
    \end{pmatrix}\right.  
    \\ 
    & \left.
  \quad  \quad \oplus
    \begin{pmatrix}
    \,0 & x_{2,3}(a_3-a_2)\, \\ \,-\bar{x}_{2,3}(a_3-a_2) & 0\,~ 
    \end{pmatrix}\right]
    \\
    &=
    \begin{pmatrix}
    \,0 & x_{1,2}(a_2-a_1)\, \\ \,-\bar{x}_{1,2}(a_2-a_1) & 0\,~ 
    \end{pmatrix}u 
    \\
    &=[D_2,\pi_2(\varphi(a_1,a_2,a_3))]u. 
\end{align*}
By induction, one can show that for $N\in\mathbbm{N}$ and $a\in \mathbbm{C}^N$ 
\begin{equation*}
u[D_N,\pi_N(a)]=[D_{N-1},\pi_{N-1}(\varphi(a))]u,
\end{equation*}
hence $F_{N-1}$ is a subtriple of $F_N$. 
\end{Ex}
\begin{Ex}{\bf Finite-dimensional C$^*$-algebra}\\
Let $(\mathcal{A},\mathcal{H}_\mathcal{A},\mathcal{D}_\mathcal{A})$ and $(\mathcal{B},\mathcal{H}_\mathcal{B},\mathcal{D}_\mathcal{B})$ be finite spectral triples, and $\varphi:\mathcal{A}\rightarrow \mathcal{B}$ be a surjective homomorphism between $\mathcal{A}$ and $\mathcal{B}$. 

Then by \cite{SIGMA16} $\mathcal{B}$ is a submanifold algebra of $\mathcal{A}$. Suppose the representation space of $\mathcal{A}$ and $\mathcal{B}$ are $\mathbbm{C}^m$ and $\mathbbm{C}^n$, respectively. We define $u$ to be 
 \begin{equation*}
     u=\begin{pmatrix} \mathbbm{1}_n & 0
     \end{pmatrix}_{n\times m}
 \end{equation*}
It is easy to see that $u^*\pi_\mathcal{B}(\varphi(a))u$ is a subrepresentation of $\pi_\mathcal{A}$, therefore, the condition (2) is satisfied. If one chooses $\mathcal{D}_\mathcal{A}$ and $\mathcal{D}_\mathcal{B}$ satisfying one of the conditions a, b, c or d of Definition \ref{def: iso} then $(\mathcal{B},\mathcal{H}_\mathcal{B},\mathcal{D}_\mathcal{B})$ is a subtriple of $(\mathcal{A},\mathcal{H}_\mathcal{A},\mathcal{D}_\mathcal{A})$.
\end{Ex}
\begin{Ex} {\bf Almost commutative geometry}\\
\label{Almost}
Consider a discrete spectral triple $F_{\rm ED}:=(\mathbbm{C}^2,\mathbbm{C}^4, D_F,\gamma_F)$,
where
\begin{equation*}
    D_F=\begin{pmatrix} 0 & d & 0 & 0 \\
                 \bar{d} & 0 & 0 & 0 \\
                 0 & 0 & 0 & \bar{d} \\ 
                 0 & 0 & d & 0
        \end{pmatrix}, ~~\gamma_F=\begin{pmatrix} -1 & 0 & 0 & 0 \\
                 0 & 1 & 0 & 0 \\
                 0 & 0 & 1 & 0 \\ 
                 0 & 0 & 0 & -1
        \end{pmatrix}.
\end{equation*} 
The spectral triple is used in the formulation of electrodynamics. The product of this spectral triple with a canonical spectral triple is called an almost commutative spectral triple, which describes electromagnetic interaction on curved space
\begin{equation*}
M\times F_{\rm ED}:=\left(C^\infty(M,\mathbbm{C}^2),L^2(M,S\otimes\mathbbm{C}^4),\mathcal{D}_M\otimes\mathbbm{1}_4+\gamma_M\otimes \mathcal{D}_F\right),
\end{equation*}
where $\gamma_M$ is the grading operator on $M$. 

To show that the canonical spectral triple is a Riemannian subtriple of $M\times F_{\rm ED}$, suppose that $f \in C^\infty(M)$ and $\psi \in L^2(M,S)$, we define a map $\varphi: C^\infty(M,\mathbbm{C}^2) \rightarrow  C^\infty(M)$ by $\varphi(f_1,f_2):=f_1$, and $u$ is given by the tensor product of identity and $1\times 4$ matrix i.e.
 \begin{equation*}
    u={\rm Id}\otimes v,
\end{equation*}
where $v=\left(\begin{array}{cccc} 1 & 0 & 0 & 0 \end{array}\right).$
Then we have
\begin{align*}
u\pi'(f_1,f_2)=
f_1{\rm Id}\otimes v 
 =
 \pi(\varphi(f_1,f_2))u,    
\end{align*}
for the representations $\pi$ and $\pi'$ of $L^2(M,S\otimes\mathbbm{C}^2)$ and $L^2(M,S\otimes\mathbbm{C}^4)$, respectively. One can show that
\begin{align*}
u[\mathcal{D}_M\otimes\mathbbm{1}_4+&\gamma_M\otimes\mathcal{D}_F,\pi'(f_1,f_2)]\\
=&~u\left[\mathcal{D}_M\otimes\mathbbm{1}_4,{\rm Id}\otimes\begin{pmatrix} f_1\mathbbm{1}_2 & 0 \\ 0 & f_2\mathbbm{1}_2 \end{pmatrix}\right] \\
&~+ u\left[\gamma_M\otimes\mathcal{D}_F,{\rm Id}\otimes\begin{pmatrix} f_1\mathbbm{1}_2 & 0 \\ 0 & f_2\mathbbm{1}_2 \end{pmatrix}\right] \\
 =&~u\left[\mathcal{D}_M\otimes\mathbbm{1}_4,{\rm Id}\otimes\begin{pmatrix} f_1\mathbbm{1}_2 & 0 \\ 0 & f_2\mathbbm{1}_2 \end{pmatrix}\right]\\
  =&~u(-i\gamma^k)\otimes\begin{pmatrix} \partial_kf_1\mathbbm{1}_2 & 0 \\ 0 & \partial_kf_2\mathbbm{1}_2 \end{pmatrix}\\
  =&~(\mathcal{D}_M f_1)\otimes v \\
  =&~[\mathcal{D}_M,\pi(\varphi(f_1,f_2))]u.
\end{align*}
\end{Ex}
Hence, the canonical triple is a subtriple of $M\times F_{ED}$. All the examples above show that the previous definitions can work at least in the almost commutative setting. Although it is possible that such definitions might be compatible with a more general class of noncommutative spectral triples, examples for such cases are needed and a fully noncommutative geometric treatment of spectral subtriples might require the usage of suitable bimodules in place of 
surjective homomorphism. 

\section{Dirac Decompositions and Isometric Subtriples} \label{Iso}
From here on, we will investigate embeddings of spectral triples in the extremely restrictive cases where $u:\mathcal{H}_2\rightarrow \mathcal{H}_1$ is a partial isometry.

\begin{Lem}
\label{Pcomm} Let $(\mathcal{A}_1,\mathcal{H}_1,\mathcal{D}_1)\xleftarrow{(\varphi,u)} (\mathcal{A}_2,\mathcal{H}_2,\mathcal{D}_2)$ be an embedding of spectral triples as in definition \ref{subtriple}. 
If $u:\mathcal{H}_2\to\mathcal{H}_1$ is partial isometry with projection $P:=u^*u\in\mathcal{B}(\mathcal{H}_2)$, then $[\pi_2(x),P]=0$, for all $x\in \mathcal{A}_2$.
\end{Lem}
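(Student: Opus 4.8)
The goal is to show that the projection $P=u^*u$ commutes with every $\pi_2(x)$, for $x\in\mathcal{A}_2$. The plan is to exploit the single structural hypothesis relating the two representations through $u$, namely the intertwining relation $u\,\pi_2(x)=\pi_1(\varphi(x))\,u$ from point~3.~of definition~\ref{subtriple}, together with the fact that $u$ is a partial isometry, so that $P=u^*u$ and $u u^*$ is the range projection onto $\mathcal{H}_1$ (here $u$ is onto $\mathcal{H}_1$, so $uu^*=\mathrm{Id}_{\mathcal{H}_1}$).

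First I would take the intertwining identity $u\,\pi_2(x)=\pi_1(\varphi(x))\,u$ and produce its adjoint form. Since $\mathcal{A}_2$ is a $*$-algebra and $\varphi$, $\pi_1$, $\pi_2$ are $*$-homomorphisms, replacing $x$ by $x^*$ and taking Hilbert-space adjoints yields
\begin{equation*}
\pi_2(x)\,u^*=u^*\,\pi_1(\varphi(x)),\qquad\forall x\in\mathcal{A}_2.
\end{equation*}
This is the key companion relation: it pushes $\pi_2(x)$ past $u^*$ at the cost of converting it into $\pi_1(\varphi(x))$ on the $\mathcal{H}_1$ side.

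With both identities in hand, the computation of $P\,\pi_2(x)$ and $\pi_2(x)\,P$ is routine. On one side,
\begin{equation*}
P\,\pi_2(x)=u^*u\,\pi_2(x)=u^*\,\pi_1(\varphi(x))\,u,
\end{equation*}
using the intertwiner. On the other side,
\begin{equation*}
\pi_2(x)\,P=\pi_2(x)\,u^*u=u^*\,\pi_1(\varphi(x))\,u,
\end{equation*}
using the adjoint relation. The two right-hand sides coincide, hence $P\,\pi_2(x)=\pi_2(x)\,P$, i.e. $[\pi_2(x),P]=0$ for all $x\in\mathcal{A}_2$, as claimed.

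The only genuine subtlety, rather than an obstacle, is the passage to the adjoint relation: one must be sure that $u$ (possibly unbounded in the general definition) is here bounded so that taking adjoints is legitimate, but this is guaranteed by the hypothesis that $u$ is a partial isometry, which is in particular bounded with $u^*$ everywhere defined. Once boundedness is secured the argument is purely algebraic, and no further analytic input (smoothness, submanifold condition, or properties of the Dirac operators) is needed.
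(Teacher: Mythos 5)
Your argument is correct and is essentially identical to the paper's own proof: multiply the intertwining relation $u\,\pi_2(x)=\pi_1(\varphi(x))\,u$ on the left by $u^*$ to get $P\,\pi_2(x)=u^*\pi_1(\varphi(x))u$, and obtain the companion identity $\pi_2(x)\,P=u^*\pi_1(\varphi(x))u$ by taking adjoints (the paper calls this ``conjugating''). Your side remark that $uu^*=\mathrm{Id}_{\mathcal{H}_1}$ is not needed for the computation, but it does no harm.
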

\begin{proof}
From the definition of smooth morphism in definition~\ref{subtriple} one has:  
\begin{equation}
u^*\pi_1(\varphi(x))u=u^*u\pi_2(x)=P\pi_2(x), \quad\quad \forall x\in\mathcal{A}_2. \label{YibYib}
\end{equation}
Conjugating \eqref{YibYib} one obtains also $\pi_2(x)P=u^*\pi_1(\varphi(x))u$. 
\end{proof}

Let $\rho,\rho'\in\mathscr{P}(\mathcal{A}_1)$ be pure states, consider the metrics
\begin{align}
d_{\mathcal{D}_1}(\rho,\rho')=&~\sup\{|\rho(a)-\rho'(a)|~\big{|}~  a\in \mathcal{A}_1, \|[\mathcal{D}_1,\pi_1(a)]\|\leq 1\}\nonumber \\     
     \leq&~\sup\{|\rho(a)-\rho'(a)|~\big{|}~  a\in \mathcal{A}_1, \|u^*[\mathcal{D}_1,\pi_1(a)]u\|\leq 1\} \nonumber\\
     =&~\sup\{|\rho(\varphi(b))-\rho'(\varphi(b))|~\big{|}~  b\in \mathcal{A}_2, \|u^*[\mathcal{D}_1,\pi_1(\varphi(b))]u\|\leq 1\}\nonumber\\
     =&~\sup\{|\rho(\varphi(b))-\rho'(\varphi(b))|~\big{|}~  b\in \mathcal{A}_2, \|u^*[\mathcal{D}_1,\pi_1(\varphi(b))]uu^*u\|\leq 1\}\nonumber\\
     =&~\sup\{|\rho(\varphi(b))-\rho'(\varphi(b))|~\big{|}~ b\in \mathcal{A}_2, \|P[\mathcal{D}_2,\pi_2(b)]P\|\leq 1\} \nonumber \\
     =&~\sup\{|\varphi^*\rho(b)-\varphi^*\rho'(b)|~\big{|}~ b\in \mathcal{A}_2, \|[P\mathcal{D}_2P,\pi_2(b)]\|\leq 1\}\nonumber \\
     =&~ d_{P\mathcal{D}_2P}(\varphi^*\rho,\varphi^*\rho'). \label{d2}
\end{align}
Note that we have used the definition of Riemannian spectral triple and 
$\|u^*[\mathcal{D}_1,\pi_1(a)]u\|\leq\|[\mathcal{D}_1,\pi_1(a)]\|$ or equivalently, by Lemma \ref{Pcomm}, $\|[P\mathcal{D}_2P,\pi_2(b)]\|\leq\|[\mathcal{D}_1,\pi_1(\varphi(b))]\|$. 

\medskip

\begin{Rem}
Under the condition that $u$ is a partial isometry, we have that the isometric condition $\|[P\mathcal{D}_2P,\pi_2(x)]\|=\|[\mathcal{D}_1,\pi_1(\varphi(x)]\|$, for Riemannian morphisms, is always satisfied: we know that isomorphisms of C*-algebras are norm-isometric; we know from the previous section that $T\mapsto u^*Tu$ provides an isomorphism between the Clifford algebra $\Omega_{\mathcal{D}_1}(\mathcal{A}_1)$ and $\Omega_{P\mathcal{D}_2P}(P\pi_2(\mathcal{A}_2)P)$; 
hence this isomorphism will extend to the C*-Clifford algebras (the norm completion of $\Omega_{\mathcal{D}}(\mathcal{A})$ in the operator norm) and 
this isomorphism will be necessarily operator norm isometric hence the metric condition is valid. 
\end{Rem}

\begin{Lem}\label{ccal} If $(\mathcal{A}_1,\mathcal{H}_1,\mathcal{D}_1)$ is a Riemannian subtriple of $(\mathcal{A}_2,\mathcal{H}_2,\mathcal{D}_2)$, with a projection $P=u^*u$. 
Let $\rho,\rho'\in\mathscr{P}(\mathcal{A}_1)$, if $\|[P\mathcal{D}_2P,\pi_2(b)]\|=\|[\mathcal{D}_1,\pi_1(\varphi(b))]\|$, for all $b\in \mathcal{A}_2$, then  
\begin{equation*}
d_{P\mathcal{D}_2P}(\varphi^*\rho,\varphi^*\rho')=d_{\mathcal{D}_1}(\rho,\rho').
\end{equation*}
\end{Lem}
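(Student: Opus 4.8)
The plan is to observe that the chain of (in)equalities already displayed in~\eqref{d2} delivers one half of the claim for free, and then to supply the opposite inequality directly from the extra norm hypothesis. Reading~\eqref{d2} from top to bottom, every passage is an identity except the single inequality in the second line, which arises solely from relaxing the constraint $\|[\mathcal{D}_1,\pi_1(a)]\|\le 1$ to the a priori weaker $\|u^*[\mathcal{D}_1,\pi_1(a)]u\|\le 1$ (using $\|u^*[\mathcal{D}_1,\pi_1(a)]u\|\le\|[\mathcal{D}_1,\pi_1(a)]\|$). Thus, invoking only the Riemannian condition and Lemma~\ref{Pcomm}, one already has $d_{\mathcal{D}_1}(\rho,\rho')\le d_{P\mathcal{D}_2P}(\varphi^*\rho,\varphi^*\rho')$, so it remains to prove the reverse.

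For the reverse inequality I would argue directly on the defining suprema. Fix $b\in\mathcal{A}_2$ admissible for $d_{P\mathcal{D}_2P}$, that is, with $\|[P\mathcal{D}_2P,\pi_2(b)]\|\le 1$. The hypothesis $\|[P\mathcal{D}_2P,\pi_2(b)]\|=\|[\mathcal{D}_1,\pi_1(\varphi(b))]\|$ then forces $a:=\varphi(b)\in\mathcal{A}_1$ to satisfy $\|[\mathcal{D}_1,\pi_1(a)]\|\le 1$, so $a$ is admissible for $d_{\mathcal{D}_1}$. Since $\varphi^*\rho(b)=\rho(\varphi(b))=\rho(a)$ and likewise for $\rho'$, the test value obeys $|\varphi^*\rho(b)-\varphi^*\rho'(b)|=|\rho(a)-\rho'(a)|\le d_{\mathcal{D}_1}(\rho,\rho')$. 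Taking the supremum over all admissible $b$ gives $d_{P\mathcal{D}_2P}(\varphi^*\rho,\varphi^*\rho')\le d_{\mathcal{D}_1}(\rho,\rho')$, and combining with the previous paragraph yields the asserted equality.

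The argument is essentially constraint-set bookkeeping, so I expect no deep obstacle; the points needing care are the partial-isometry identities underlying~\eqref{d2}, namely $uu^*u=u$, $P=u^*u$, and the commutation $[\pi_2(b),P]=0$ from Lemma~\ref{Pcomm}, which together with the Riemannian condition give $u^*[\mathcal{D}_1,\pi_1(\varphi(b))]u=P[\mathcal{D}_2,\pi_2(b)]P=[P\mathcal{D}_2P,\pi_2(b)]$ and hence $\|u^*[\mathcal{D}_1,\pi_1(\varphi(b))]u\|=\|[P\mathcal{D}_2P,\pi_2(b)]\|$. Note that surjectivity of $\varphi$ is not actually required for the reverse inequality above; it is, however, what guarantees, in tandem with the norm hypothesis, that the admissible sets of the two suprema coincide exactly. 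This gives an alternative, fully symmetric route: under the norm equality the relaxation in the second line of~\eqref{d2} becomes vacuous, so every step there is an identity and the two distances agree outright.
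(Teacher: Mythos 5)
Your proof is correct and takes essentially the same route as the paper: the paper's own proof simply observes that under the norm hypothesis (and surjectivity of $\varphi$) the single inequality in the chain~\eqref{d2} becomes an equality, which is exactly the ``fully symmetric route'' you note at the end. Your more explicit two-inequality unpacking is a harmless elaboration of the same argument.
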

\begin{proof}
This follows immediately from the previous calculation~\eqref{d2}, noting that the first and the sixth line there are equal as a consequence of the requirement $\|[P\mathcal{D}_2P,\pi_2(b)]\|=\|[\mathcal{D}_1,\pi_1(\varphi(b))]\|$, for all $b\in \mathcal{A}_2$ and the surjectivity of $\varphi$. 
\end{proof}

\medskip

If the projection $P$ defined above commutes with the Dirac operator i.e. $[\mathcal{D}_2,P]=0$, then
\begin{align*}
     \mathcal{D}_2&= \mathcal{D}_2P\oplus\mathcal{D}_2({\rm Id}-P) \\
                &=P\mathcal{D}_2P\oplus({\rm Id}-P)\mathcal{D}_2({\rm Id}-P) \\
                &=\tilde{\mathcal{D}}_1\oplus\mathcal{D}_{\rm norm},
\end{align*}
where we define $\tilde{\mathcal{D}}_1:=P\mathcal{D}_2P$ and $\mathcal{D}_{\rm norm}:=({\rm Id}-P)\mathcal{D}_2({\rm Id}-P)$. 
In the following theorem we generalize the condition employed in Theorem \ref{Lhimma}.

\begin{Thm}
\label{Th401}
Let $(\mathcal{A}_1,\mathcal{H}_1,\mathcal{D}_1)$ be a Riemannian subtriple of $(\mathcal{A}_2,\mathcal{H}_2,\mathcal{D}_2)$, with projection $P=u^*u$, and $[\mathcal{D}_2,P]=0$, then
\begin{description}
\item i) $\|[\mathcal{D}_2,\pi_2(b)]\|_{P\mathcal{H}_2}=\|[P\mathcal{D}_2P,\pi_2(b)]\|_{\mathcal{H}_2}$, for all $b\in\mathcal{A}_2$   
\item ii) If the Dirac satisfies $ \|[P\mathcal{D}_2P,\pi_2(b)]\|_{\mathcal{H}_2}=\|[\mathcal{D}_1,\pi_1(\varphi(b))]\|_{\mathcal{H}_1}$ for all $b \in \mathcal{A}_2$, then the subtriple is isometric.
\end{description}
\end{Thm}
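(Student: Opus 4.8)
The plan is to treat the two parts separately, deriving (i) purely from the commutation relations available in the hypotheses and obtaining (ii) as an immediate application of Lemma~\ref{ccal}.

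For part~(i), I would first exploit the two commutation facts at hand: the hypothesis $[\mathcal{D}_2,P]=0$ together with the identity $[\pi_2(b),P]=0$ granted by Lemma~\ref{Pcomm}. From $[\mathcal{D}_2,P]=0$ and $P^2=P$ one gets $P\mathcal{D}_2P=\mathcal{D}_2P=P\mathcal{D}_2$, so that $\tilde{\mathcal{D}}_1=P\mathcal{D}_2P$ is genuinely the compression of $\mathcal{D}_2$ to the invariant subspace $P\mathcal{H}_2$. Substituting this and pushing $P$ through $\pi_2(b)$ (using $[\pi_2(b),P]=0$), the commutator collapses to $[P\mathcal{D}_2P,\pi_2(b)]=[\mathcal{D}_2,\pi_2(b)]P$.

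Next I would observe that $P$ also commutes with the bounded operator $[\mathcal{D}_2,\pi_2(b)]$ (this follows from the same two commutation relations), so that $[\mathcal{D}_2,\pi_2(b)]$ is block diagonal with respect to the orthogonal decomposition $\mathcal{H}_2=P\mathcal{H}_2\oplus(\mathrm{Id}-P)\mathcal{H}_2$. Consequently $[\mathcal{D}_2,\pi_2(b)]P$ vanishes on $(\mathrm{Id}-P)\mathcal{H}_2$ and agrees with $[\mathcal{D}_2,\pi_2(b)]$ on $P\mathcal{H}_2$, so its operator norm on all of $\mathcal{H}_2$ equals the operator norm of the restriction of $[\mathcal{D}_2,\pi_2(b)]$ to $P\mathcal{H}_2$. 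This yields exactly the claimed equality $\|[\mathcal{D}_2,\pi_2(b)]\|_{P\mathcal{H}_2}=\|[P\mathcal{D}_2P,\pi_2(b)]\|_{\mathcal{H}_2}$.

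For part~(ii), I would simply note that the extra hypothesis $\|[P\mathcal{D}_2P,\pi_2(b)]\|_{\mathcal{H}_2}=\|[\mathcal{D}_1,\pi_1(\varphi(b))]\|_{\mathcal{H}_1}$ for all $b\in\mathcal{A}_2$ is precisely the assumption of Lemma~\ref{ccal}. Since $(\mathcal{A}_1,\mathcal{H}_1,\mathcal{D}_1)$ is a Riemannian subtriple with projection $P=u^*u$, that lemma then gives $d_{P\mathcal{D}_2P}(\varphi^*\rho,\varphi^*\rho')=d_{\mathcal{D}_1}(\rho,\rho')$ for all $\rho,\rho'\in\mathscr{P}(\mathcal{A}_1)$, which is exactly condition~d.\ of Definition~\ref{def: iso}; hence the subtriple is isometric. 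I note that $[\mathcal{D}_2,P]=0$ is not actually needed for this half, but it is what makes $\tilde{\mathcal{D}}_1=P\mathcal{D}_2P$ split off as the Dirac decomposition motivating the statement. The only genuine care required lies in the unboundedness of $\mathcal{D}_2$: the identities $P\mathcal{D}_2P=\mathcal{D}_2P$ and the commutator manipulations should be carried out on the common invariant core provided by Definition~\ref{def: iso}, but since each $[\mathcal{D}_2,\pi_2(b)]$ extends to a bounded operator by the spectral-triple axioms and $P$ is bounded and commutes with $\mathcal{D}_2$, these extend by continuity to all of $\mathcal{H}_2$ and the norm computations are unaffected. Thus the main obstacle is merely the domain bookkeeping for the unbounded Dirac operator, the algebraic content being elementary.
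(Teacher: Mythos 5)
Your proposal is correct and follows essentially the same route as the paper: part (i) rests on the observation that $[\mathcal{D}_2,P]=0$ together with $[\pi_2(b),P]=0$ (Lemma~\ref{Pcomm}) forces $[P\mathcal{D}_2P,\pi_2(b)]=[\mathcal{D}_2,\pi_2(b)]P$ with $[\mathcal{D}_2,\pi_2(b)]$ commuting with $P$, and part (ii) is a direct appeal to Lemma~\ref{ccal}. The paper phrases the norm comparison as an explicit supremum over decompositions $\psi=\xi+\eta$ rather than via block-diagonality, but the content is identical, and your remarks that $[\mathcal{D}_2,P]=0$ is unused in (ii) and that the manipulations should be read on a common core are accurate refinements rather than deviations.
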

\begin{proof} 
To prove i) suppose $\psi=\xi+\eta$ where $\xi\in P\mathcal{H}_2,$ and $\eta\in (\mathbbm{1}-P)\mathcal{H}_2$. Let $b \in \mathcal{A}_2$ and $[\mathcal{D}_2,b]\psi=\tilde{\xi}+\tilde{\eta}$, for some $\tilde{\xi}\in P\mathcal{H}_2$ and $\tilde{\eta}\in (\mathbbm{1}-P)\mathcal{H}_2$, then 
\begin{align*}
    [\mathcal{D}_2,b]\xi=
    [\mathcal{D}_2,b]P\xi 
                        =
                        P[\mathcal{D}_2,b]\xi 
                        =
                        \tilde{\xi},
\end{align*}
hence $\displaystyle \|[\mathcal{D}_2,b]\|_{P\mathcal{H}_2}=\sup_{\xi\not=0}\|\tilde{\xi}\|/\|\xi\|$. 
Then consider the norm 
\begin{align*}
\|[P\mathcal{D}_2P,b]\|_{\mathcal{H}_2}=\sup_{\psi\not=0}F(\xi,\eta),
\end{align*}
where $F(\xi,\eta)=\|\tilde{\xi}\|/\sqrt{\|\xi\|^2+\|\eta\|^2}$. Note that for the case $\xi=0$, $F(0,\eta)=0$ because $P[\mathcal{D}_2,b]P\eta=0$, and for the case $\xi\not=0$, one has $F(\xi,0)>F(\xi,\eta)$. 

The supremum can be obtained only when $\xi\not=0$ and $\eta=0$, therefore, $\|[\mathcal{D}_2,b]\|_{P\mathcal{H}_2}=\|[P\mathcal{D}_2P,b]\|_{\mathcal{H}_2}$.

\medskip 

To prove $ii)$, suppose the Dirac operator satisfies the assumption, by lemma \ref{ccal} the subtriple is isometric. 
\end{proof}  

\begin{Cor} 
A totally geodesic subtriple with $uu^*=\text{Id}_{\mathcal{H}_1}$ is an isometric subtriple.
\end{Cor}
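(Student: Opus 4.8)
The plan is to read the hypotheses of the corollary as supplying precisely the three ingredients needed to invoke Theorem~\ref{Th401}(ii). First, the assumption $uu^*=\mathrm{Id}_{\mathcal{H}_1}$ says exactly that $u$ is a partial isometry onto $\mathcal{H}_1$, so $P:=u^*u$ is an orthogonal projection in $\mathcal{B}(\mathcal{H}_2)$ and the ``partial isometry onto $\mathcal{H}_1$'' clause of point~d.~in Definition~\ref{def: iso} is automatic. Next, the totally geodesic condition $u\mathcal{D}_2=\mathcal{D}_1 u$ (point~c.) together with the proposition asserting that totally geodesic isometries are Riemannian isometries upgrades the embedding to a Riemannian subtriple, i.e.\ $u[\mathcal{D}_2,\pi_2(x)]=[\mathcal{D}_1,\pi_1(\varphi(x))]u$ for all $x\in\mathcal{A}_2$. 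Thus $(\mathcal{A}_1,\mathcal{H}_1,\mathcal{D}_1)$ is a Riemannian subtriple with projection $P=u^*u$, which is the standing hypothesis of Theorem~\ref{Th401}.

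Second, I would verify the commutation $[\mathcal{D}_2,P]=0$ required by Theorem~\ref{Th401}. Taking adjoints in $u\mathcal{D}_2=\mathcal{D}_1 u$ and using the self-adjointness $\mathcal{D}_1^*=\mathcal{D}_1$, $\mathcal{D}_2^*=\mathcal{D}_2$ gives $\mathcal{D}_2 u^*=u^*\mathcal{D}_1$; combining the two identities yields $P\mathcal{D}_2=u^*u\mathcal{D}_2=u^*\mathcal{D}_1 u=\mathcal{D}_2 u^*u=\mathcal{D}_2 P$, so $P$ commutes with $\mathcal{D}_2$. Consequently $\mathcal{D}_2$ splits as $\tilde{\mathcal{D}}_1\oplus\mathcal{D}_{\mathrm{norm}}$ with $\tilde{\mathcal{D}}_1=P\mathcal{D}_2P$, exactly the decomposition feeding Theorem~\ref{Th401}.

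Third, the hypothesis of Theorem~\ref{Th401}(ii), namely the commutator-norm equality $\|[P\mathcal{D}_2P,\pi_2(b)]\|=\|[\mathcal{D}_1,\pi_1(\varphi(b))]\|$ for all $b\in\mathcal{A}_2$, is supplied by the earlier remark following~\eqref{d2}: since $uu^*=\mathrm{Id}_{\mathcal{H}_1}$ and the morphism is Riemannian, the compression $T\mapsto u^*Tu$ is a $*$-isomorphism of $\CCl_{\mathcal{D}_1}(\mathcal{A}_1)$ onto $P\,\CCl_{\mathcal{D}_2}(\mathcal{A}_2)\,P$ (with inverse $T\mapsto uTu^*$), and a $*$-isomorphism of C*-algebras is automatically operator-norm isometric; this yields the desired equality once one notes, via Lemma~\ref{Pcomm}, that $[P\mathcal{D}_2P,\pi_2(b)]=P[\mathcal{D}_2,\pi_2(b)]P$. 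With all three ingredients in hand, Theorem~\ref{Th401}(ii) (equivalently Lemma~\ref{ccal} through the chain of equalities~\eqref{d2}) gives $d_{P\mathcal{D}_2P}(\varphi^*\rho,\varphi^*\rho')=d_{\mathcal{D}_1}(\rho,\rho')$ for all $\rho,\rho'\in\mathscr{P}(\mathcal{A}_1)$, which is precisely point~d., so the subtriple is isometric.

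The genuinely delicate point is the unbounded-operator bookkeeping in the second step. Although $u$ is now bounded (being a partial isometry), the Dirac operators $\mathcal{D}_1,\mathcal{D}_2$ are unbounded, so the chain $u\mathcal{D}_2=\mathcal{D}_1u\Rightarrow\mathcal{D}_2u^*=u^*\mathcal{D}_1\Rightarrow[\mathcal{D}_2,P]=0$ has to be read on the appropriate domains, controlling that $u$ and $u^*$ carry the relevant cores of $\mathcal{D}_2$ and $\mathcal{D}_1$ into one another. Once these domain matters are settled, the remaining steps — the passage to a Riemannian morphism, the operator-norm isometry of the Clifford $*$-isomorphism (after closure to the C*-Clifford algebras), and the final assembly through Theorem~\ref{Th401} — are purely algebraic and routine.
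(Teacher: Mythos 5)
Your proof is correct and follows essentially the same route as the paper: show $[\mathcal{D}_2,P]=0$ from $u\mathcal{D}_2=\mathcal{D}_1u$ and its adjoint, obtain the commutator-norm equality $\|[P\mathcal{D}_2P,\pi_2(b)]\|=\|[\mathcal{D}_1,\pi_1(\varphi(b))]\|$, and conclude via Theorem~\ref{Th401}(ii) and Lemma~\ref{ccal}. The only (minor) divergence is in the middle step, where you invoke the remark that the Clifford $*$-isomorphism $T\mapsto u^*Tu$ is automatically norm-isometric, whereas the paper argues directly by sandwiching $[\mathcal{D}_1,\pi_1(a)]$ between $uu^*=\mathrm{Id}_{\mathcal{H}_1}$ and using $\|u\|,\|u^*\|\le 1$; both arguments appear in the paper and yield the same equality.
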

\begin{proof}
Let $(\mathcal{A}_1,\mathcal{H}_1,\mathcal{D}_1)$ be a totally geodesic subtriple of $(\mathcal{A}_2,\mathcal{H}_2,\mathcal{D}_2)$, consider
\begin{equation*}
    P\mathcal{D}_2=~u^*u\mathcal{D}_2=u^*\mathcal{D}_1u=\mathcal{D}_2u^*u=\mathcal{D}_2P.
\end{equation*}
So, $[\mathcal{D}_2,P]=0$, and since $uu^*=\text{Id}_{\mathcal{H}_1}$, we have 
\begin{equation*}
\|[\mathcal{D}_1,\pi_1(a)]\|\leq\|uu^*[\mathcal{D}_1,\pi_1(a)]uu^*\|\leq \|u[\mathcal{D}_1,\pi_1(a)]u^*\|.
\end{equation*}
Recalling that $\|u[\mathcal{D}_1,\pi_1(a)]u^*\|\leq \|[\mathcal{D}_1,\pi_1(a)]\|$, we see that $(\mathcal{A}_1,\mathcal{H}_1,\mathcal{D}_1)$ 
is an isometric subtriple.
\end{proof}

\begin{Ex} 
{\bf Almost commutative space} \\
In the previous section we have shown that a canonical spectral triple is a Riemannian subtriple of an almost commutative geometry. Suppose we define $\varphi$ and $u$ as in Example \ref{Almost}, we have $u^*={\rm Id}\otimes v^*$. The projection $P=u^*u={\rm Id}\otimes P_{v}$, where
\begin{equation*}
    P_{v}=v^*v=\begin{pmatrix} \mathbbm{1}_2 & 0\\
                          0 & 0 
\end{pmatrix}
\end{equation*}
commutes with the Dirac operator. One obtains
\begin{align*}
    \tilde{\mathcal{D}}_1=&~\mathcal{D}_M\otimes P_{v} +
   \gamma\otimes \begin{pmatrix} 0 & d  & 0 & 0 \\
                                    \bar{d} & 0 & 0 & 0 \\
                                    0 & 0 & 0 & 0 \\
                                    0 & 0 & 0 & 0 \end{pmatrix}.
\end{align*}
Straight forward calculation shows that $\|[\tilde{\mathcal{D}}_1,\pi'(f_1,f_2)]\|=\|[\mathcal{D}_M,\pi(\varphi(f_1,f_2))]\|$, hence by Theorem \ref{Th401} the canonical triple is an isometric submanifold. 
One may prove this fact independently using the metric. Let $p \in M$, an evaluation map $\varepsilon_p: C^\infty(M) \rightarrow \mathbbm{C}$ is a pure state associated with the algebra of smooth functions. The induced map between pure states yields
\begin{align*}
    \varphi^*\varepsilon_p(f_1,f_2)=&~\varepsilon_p(\varphi(f_1,f_2)) \\
                               =&~\varepsilon_p(f_1)\\
                               =&~f_1(p) \\
                               =&~\varepsilon_p\otimes\omega(f_1,f_2) \\
                               =&~
                               \frac{1}{2}\,{\rm tr}\left[P_v \begin{pmatrix}f_1(p)\mathbbm{1}_2 & 0 \\
                    0 & f_2(p)\mathbbm{1}_2 
                   \end{pmatrix}\right],
\end{align*}
where $\omega$ is the pure state associated with $P_v$. Note that $\omega$ is independent with choice of $p$, by Theorem 2 from \cite{Martinetti:2001fq}
\begin{equation}
d(\varphi^*\varepsilon_p,\varphi^*\varepsilon_q)=d(\varepsilon_p\otimes\omega,\varepsilon_q\otimes\omega) =d(\varepsilon_p,\varepsilon_q). 
\end{equation}
The canonical triple is an isometric subtriple of the almost commutative spectral triple.
\end{Ex}

\section*{Acknowledgement}
This research project is supported by CMU Junior Research Fellowship Program, project code: JRCMU2564\_046. P. Bertozzini thanks his longtime collaborator R.Conti for several online relevant discussions.  

{\small

}

\end{document}